\documentclass[12pt]{article}
\usepackage{epsfig,amsfonts,amsthm}
\usepackage{amsmath,amssymb}
\usepackage{cite}
\usepackage[dvipsnames]{color}
\newcommand{\be}{\begin{equation}}
\newcommand{\ee}{\end{equation}}
\newcommand{\bea}{\begin{eqnarray}}
\newcommand{\eea}{\end{eqnarray}}

\newcommand{\doublet}[2]{ \left( \begin{array}{c}#1 \\ #2 \end{array}\right) }

\newcommand{\Z}{\mathbb{Z}}
\newcommand{\x}{{\mathsf{x}}}
\newcommand{\y}{{\mathsf{y}}}

\newtheorem{theorem}{Theorem}
\newtheorem{proposition}[theorem]{Proposition}
\newtheorem{lemma}[theorem]{Lemma}
\newtheorem{conjecture}[theorem]{Conjecture}

\def\lsim{\mathrel{\rlap{\lower4pt\hbox{\hskip1pt$\sim$}}
    \raise1pt\hbox{$<$}}}         
\def\gsim{\mathrel{\rlap{\lower4pt\hbox{\hskip1pt$\sim$}}
    \raise1pt\hbox{$>$}}}         


\topmargin = -1.5cm
\textwidth 16.8 cm
\textheight 23.5 cm
\evensidemargin = -0.3 cm
\oddsidemargin = -0.3 cm

\title{Abelian symmetries in multi-Higgs-doublet models}

\author{Igor~P.~Ivanov$^{1,2}$, Venus~Keus$^{1}$, Evgeny Vdovin$^{2}$
\\
  {\small $^1$ IFPA, Universit\'{e} de Li\`{e}ge, All\'{e}e du 6 Ao\^{u}t 17, b\^{a}timent B5a, 4000 Li\`{e}ge, Belgium}\\
  {\small $^2$ Sobolev Institute of Mathematics, Koptyug avenue 4, 630090, Novosibirsk, Russia}\\
  }

\begin{document}
\maketitle

\begin{abstract}
$N$-Higgs doublet models (NHDM) are a popular framework to construct electroweak symmetry breaking mechanisms beyond the Standard model.
Usually, one builds an NHDM scalar sector which is invariant under a certain symmetry group.
Although several such groups have been used,
no general analysis of symmetries possible in the NHDM scalar sector exists.
Here, we make the first step towards this goal by classifying the elementary building blocks, namely the abelian symmetry groups,
with a special emphasis on finite groups.
We describe a strategy that identifies all abelian groups which are realizable as symmetry groups of the NHDM Higgs potential.
We consider both the groups of Higgs-family transformations only and the groups which also contain
generalized $CP$ transformations.
We illustrate this strategy with the examples of 3HDM and 4HDM and prove several statements for arbitrary $N$.
\end{abstract}

\section{Introduction}

The Higgs mechanism of the electroweak symmetry breaking (EWSB) has not yet been discovered experimentally,
although the LHC is expected to provide soon first definite clues about
the nature of the EWSB.
Many different variants of the Higgs mechanism have been proposed so far \cite{CPNSh}.
Moreover, even if one focuses on a specific Higgs mechanism beyond the Standard Model (SM),
one has many free parameters describing the scalar and Yukawa sectors,
all of them having phenomenological consequences.
Ideally, one would want to have a formalism that reconstructs the phenomenology of a given model
in its most general form, that is, in the entire space of the allowed free parameters.
Unfortunately, such a construction seems impossible for most non-minimal Higgs mechanisms.
In this situation the standard approach is to narrow down the freedom by imposing certain symmetries both on the Higgs sector
and on the Yukawa interactions. The problem is that in most cases it is not known a priori which symmetries can be used.
Many specific symmetry groups have been analyzed but their complete lists for every class of non-minimal Higgs model
are still lacking.

The general task of classifying symmetries in models beyond SM can be, roughly,
separated into two subtasks. First, one needs to know which symmetries can be implemented
in the scalar sector of the model and how these symmetries are broken under EWSB.
Second, one has to understand how these symmetry patterns can be encoded in the Yukawa sector
of the model and what are the resulting properties of the fermions.

In this paper we exclusively focus on the first task in multi-Higgs-doublet models,
which is an ambitious program on its own.
It is true that the symmetries to be obtained within the scalar sector can be broken by a sufficiently complicated
Yukawa sector. However, it does not diminish the importance of rigorous study within the scalar sector due to following reasons.
(1) At the tree level, the vacuum properties in the electroweak theory come entirely from the scalar potential, \cite{Haber}.
Fermions can influence the vacuum properties only via loop corrections. Therefore, it is logical
first to derive the tree-level features of the model, and next to see the effect of corrections.
(2) The Yukawa couplings do not necessarily have to break every symmetry found in the scalar sector.
It is easy to construct a Yukawa sector, for example by coupling all fermions to only one Higgs doublet,
which inherits many of the symmetries of the scalar sector. This construction can lead to
interesting phenomenological consequences, for example to the possibility of $\Z_p$-stabilized scalar dark matter,
\cite{ZpDM}. Which symmetry groups can be extended to the entire lagrangian is a separate question to study.
(3) Historically, one of the most studied model beyond SM is the two-Higgs-doublet model (2HDM), \cite{Lee,review2011}.
After decades of theoretical and phenomenological studies which included various forms of the Yukawa sector,
it became clear that there were several pressing questions (including the origin of $CP$-violation in 2HDM)
which required an in-depth look into the properties of the general scalar potential.
Only when sufficient progress in the scalar sector was made, did it become clear how to construct Yukawa sectors
incorporating additional $CP$-violating features, see examples in \cite{CPexamples}.

\subsection{Realizable symmetries}

When searching for the symmetry group which can be implemented in the scalar sector of a non-minimal Higgs model,
one must distinguish between {\em realizable} and {\em non-realizable} groups.
If it is possible to write a potential which is symmetric under a group $G$
but not symmetric under any larger symmetry group containing it, we call $G$ a realizable group
(the exact definition will be given in Section~\ref{realizable}).
A non-realizable group, on the contrary, automatically leads to a larger symmetry group of the potential.
This means that if we write a potential symmetric under a non-realizable $G$, we'll discover that
it is in fact symmetric under a larger symmetry group $H \supset G$, and there is no way to avoid that.
Roughly speaking, it means that the Higgs potential of a given model cannot accommodate overly complicated symmetry patterns.

Thus, the true symmetry properties of the potentials are reflected in realizable groups.

Just to give an example, cyclic groups such as $\Z_7$ and $\Z_{13}$
have been used in the context of the flavor symmetry problem, see e.g. \cite{frobenius} and references therein.
These symmetry groups are usually implemented in the space of fermion families, but sometimes they are
transferred to the Higgs sector of the model.
Anticipating the results of this paper, we can state that these symmetry groups are not realizable
in the three-Higgs-doublet model potential:
trying to impose such a symmetry, one ends up with a potential that has a continuous symmetry.

Therefore, if one focuses on the scalar sector of the model and aims at classifying the possible groups
which can be implemented in the scalar potential, it is natural to restrict one's choice to the realizable symmetry groups.

\subsection{Multi-Higgs-doublet models}

One particular non-minimal Higgs sector where the classification of possible symmetries of the scalar sector
has been established is the two-Higgs-doublet (2HDM), \cite{Lee,Haber,review2011}.
Although the general Higgs potential in 2HDM cannot be minimized explicitly with the straightforward algebra,
it can still be analyzed fairly completely with several basis-invariant tools that were developed in the last years.
The basic idea is that reparametrization transformations, or the Higgs-basis changes, link differently looking
but physically equivalent potentials to each other. This idea can be implemented
via the tensorial formalism at the level of Higgs fields, \cite{Botella,Davidson,Gunion,Oneil},
or via geometric constructions in the space of gauge-invariant bilinears, \cite{Sartori,Nagel,Maniatis,Nishi0,Ivanov0,generalizedCP}.
In particular, all possible groups of Higgs-family symmetries and generalized $CP$-transformations of the 2HDM potential have been found.
In addition, it was noted that if one disregards the $U(1)$ gauge transformations
the structure of the Higgs potential of 2HDM is invariant under a wider class of transformations, \cite{pilaftsis}.

It is natural to attempt to extend these successful techniques to $N$ Higgs doublets.
Some properties of the general NHDM potential were analyzed in \cite{Maniatis,Erdem,Barroso,Nishi1,Ferreira}.
In particular, the bilinear-space geometric approach was recently adapted to the NHDM in \cite{Ivanov1,Ivanov2}.
In what concerns the symmetries of the scalar sector of NHDM, several groups
have attacked various issues of this problem, \cite{Ferreira,Grimus,Olaussen,Ivanov2,Machado,Fukuyama,Adler}.
In particular, in \cite{Ferreira} an accurate and lengthy analysis of possible textures of the quartic interaction in the 3HDM
led to the list of one-parametric symmetries of the potential.
Although the method of \cite{Ferreira} could be extended to $N>3$, its practical implementation
relies on the set of all possible textures for the quartic coupling constants, which is not known {\em a priori}.

In \cite{Ivanov2} an attempt was made to describe possible symmetry groups of NHDM in the adjoint representation.
The symmetry properties of the vectors $M_i$ and $\Lambda_{0i}$ and the symmetric tensor $\Lambda_{ij}$ in the
expression (\ref{Vadjoint}) written below could be easily identified,
but the main difficulty was to take into account the non-trivial shape of the NHDM orbit space,
see details in Section \ref{subsection-potential}.
Mathematically, this led to the problem of classifying intersections of two orthogonal subgroups of $SO(N^2-1)$,
which was not solved in \cite{Ivanov2}.

It is therefore fair to state that despite several partial successes
the task of classifying all possible Higgs-family and generalized-$CP$ symmetries in NHDM is far from being accomplished.

In this paper we make a step towards solving this problem. We focus on abelian symmetries of the NHDM scalar potential
and describe a strategy that algorithmically identifies realizable abelian symmetry groups for any given $N$.
The strategy first addresses groups of unitary transformations in the space of Higgs families
and then it is extended to groups that include antiunitary (generalized $CP$) transformations.
The strategy also yields explicit examples of the potentials symmetric under a given realizable group.

The structure of the paper is the following.
In Section \ref{section-symmetries} we describe the Higgs potential of the general NHDM, discuss the group of
reparametrization transformations and introduce the notion of realizable symmetries.
In Section \ref{section-strategy} we describe the strategy that identifies all realizable abelian symmetry groups.
Then in Section \ref{section-3HDM-4HDM} we illustrate the general approach with the examples of 3HDM
and 4HDM.
Section \ref{section-NHDM} contains some further general results valid for all $N$.
Finally, in Section \ref{section-discussion} we make several remarks and draw our conclusions.
In the Appendices we prove the theorem about maximal abelian groups of $PSU(N)$
and work out in detail the groups containing antiunitary transformations in the case of 3HDM.

\section{Scalar sector of the $N$-Higgs doublet model}\label{section-symmetries}

\subsection{NHDM potential}\label{subsection-potential}

In the NHDM we introduce $N$ complex Higgs doublets with electroweak isospin $Y=1/2$:
\be
\phi_a = \doublet{\phi_a^+}{\phi_a^0}\,,\quad a=1,\dots , N\,.
\ee
Hermitian conjugation is denoted by dagger: $\phi_a^\dagger$.
The generic renormalizable Higgs potential in NHDM can be written in the tensorial form \cite{CPNSh,Haber}:
\be
\label{V:tensorial}
V = Y_{ab}(\phi^\dagger_a \phi_b) + Z_{abcd}(\phi^\dagger_a \phi_b)(\phi^\dagger_c \phi_d)\,,
\ee
where all indices run from 1 to $N$.
Coefficients in the quadratic and quartic parts of the potential are grouped into components of tensors $Y_{ab}$
and $Z_{abcd}$, respectively;
there are $N^2$ independent components in $Y$ and $N^2(N^2+1)/2$ independent components in $Z$.

Once the Higgs potential is given, the first task is to find its global minimum.
To this end, we replace the Higgs field operators by their vacuum expectation values,
and interprete (\ref{V:tensorial}) as a scalar function defined in the $2N$-complex-dimensional vector space $\mathbb{C}^{2N}$.
The electroweak gauge group is reduced to the global $SU(2)\times U(1)$ transformation group acting in this space.
Pick up a point $x \in \mathbb{C}^{2N}$ and apply a transformation $g \in SU(2)\times U(1)$.
All points $x^g = g\cdot x$ which can be reached in this way form a {\em gauge orbit}.
Since the potential is electroweak-symmetric by construction, its values are equal at all points throughout a chosen orbit.
Therefore, the Higgs potential can be throught of as a scalar function defined in this orbit space.

This leads to the question of how this orbit space can be described.
In principle, one can uniquely define an orbit by providing $N^2$ gauge-invariant bilinears $(\phi^\dagger_a \phi_b)$,
not all of them being algebraically independent.
An even more convenient way is to introduce the real vector
$r^\mu = (r_0, r_i)$ with
\be
r_0 = \sqrt{{N-1\over 2N}} \, \phi^\dagger_a \phi_a\,,\quad
r_i = {1\over 2}\phi^\dagger_a (\lambda_i)_{ab}\phi_b\,,
\ee
where $\lambda_i$ are the standard $SU(N)$ generators.
Then, the orbit space can be described completely by a list of $N$ algebraic equalities and inequalities
written for $r_0$ and $r_i$ with coefficients being the invariant $SU(N)$ tensors of rank $\le N$, \cite{Ivanov1}.
The Higgs potential can be rewritten as
\be
V = - M_0 r_0 - M_i r_i + {1 \over 2} \Lambda_{00}r_0^2 + \Lambda_{0i}r_0r_i + {1 \over 2}\Lambda_{ij}r_ir_j\,,\label{Vadjoint}
\ee
which, if needed, can be represented even more compactly as $V = - M_{\mu} r^\mu + \Lambda_{\mu\nu}r^\mu r^\nu/2$.
All the coefficients $Y_{ab}$ and $Z_{abcd}$ become now components of the vector $M_{\mu}$
and symmetric tensor $\Lambda_{\mu\nu}$, which transform under adjoint representation of the
group $GL(N,\mathbb{C})$.
For more details about the properties of the Higgs potential in the space of bilinears, see \cite{Ivanov2}.

\subsection{The group of reparametrization transformations}

When discussing symmetries of the potential, we focus on the reparametrization transformations,
which are non-degenerate linear transformations mixing different doublets $\phi_a$ but keeping invariant the kinetic term
(which includes interaction of the Higgs fields with the gauge sector of the model).
Alternatively, they can be defined as norm-preserving transformations of doublets that do not change the intradoublet structure.
In this work we do not use the more general transformations in the spirit of \cite{pilaftsis} but
focus on the ``classically defined'' reparametrization transformations.

A reparametrization transformation must be unitary (a Higgs-family transformation) or antiunitary (a generalized $CP$-transformation):
\be
U: \quad \phi_a \mapsto U_{ab}\phi_b\qquad \mbox{or} \qquad U_{CP} = U \cdot J:\quad \phi_a \mapsto U_{ab}\phi^\dagger_b\,,
\ee
with a unitary matrix $U_{ab}$. The transformation $J \equiv CP$ acts on doublets by complex conjugation and satisfies $J^2 = 1$.

Let us focus first on the unitary transformations $U$. A priori, such transformations form the group $U(N)$. However,
the overall phase factor multiplication is already taken into account by the $U(1)_Y$ from the gauge group.
This leaves us with the $SU(N)$ group of reparametrization transformations. Then, this group has a non-trivial center $Z(SU(N))= \Z_N$ generated
by the diagonal matrix $\exp(2\pi i/N)\cdot 1_N$, where $1_N$ is the identity matrix. Therefore, the group of {\em physically distinct}
unitary reparametrization transformations is the projective special unitary group
\be
G_u = PSU(N) \simeq SU(N)/\Z_N\,.\label{Gu}
\ee

Consider now antiunitary transformations $U_{CP} = U\cdot J$, $U\in U(N)$.
One can define an action of $J$ on the group $U(N)$ by
\be
J\cdot U \cdot J = U^*\,,\label{JactiononUN}
\ee
where asterisk denoted complex conjugation.
This action leaves invariant both the overall phase subgroup $U(1)_Y$ and the center of the $SU(N)$.
Therefore, we again can consider only $U_{CP} = U\cdot J$ such that $U\in PSU(N)$.
Once this action is defined, we can represent all distinct reparametrization transformations as a semidirect product
\be
G = PSU(N) \rtimes \Z_2\,.\label{G}
\ee
Recall that $G$ is a {\em semidirect product} of subgroups $A$ and $B$ with normal subgroup $A$ 
(this fact is usually denoted as $G=A\rtimes B$) if $A$ is normal in $G$ and the intersection $A\cap B$ is trivial (equals $\{1\}$).

\subsection{Realizable symmetry groups}\label{realizable}

A (reparametrization) {\em symmetry} of the Higgs lagrangian is such a reparametrization transformation
that leaves the potential invariant: $V(U_{(CP)}\phi) = V(\phi)$.
For any given potential all such transformations form a group, the automorphism group of the potential, which we denote as
$G_V = Aut(V)\cap G$ and which is obviously a subgroup of $G$ in (\ref{G}).

Let us stress that when we say a group $G_V$ is a symmetry group of the potential $V$, we mean that it is equal to, not just contained in,
$Aut(V)\cap G$. In \cite{Ivanov2} it was suggested to call such groups {\em realizable} symmetry groups.
In simple words, a symmetry group $G_V$ is realizable if one can write a potential symmetric under $G_V$ but not symmetric
under any overgroup  of $G_V$, i.e., over $H$ with $G_V\lneq H \leq G$.
For example, a potential which depends on the first doublet only via $(\phi_1^\dagger \phi_1)$
is obviously symmetric under the cyclic group $\Z_n$ of discrete phase rotations of this doublet for any $n$. However, these $\Z_n$'s
have no interest on their own because they trivially arise as subgroups of the larger symmetry group of this potential $U(1)$
describing arbitrary phase rotations.
It is this $U(1)$, not its individual subgroups $\Z_n$, which has a chance to be the realizable symmetry group of this potential.

Since in this paper we deal with groups $A_u$ containing only unitary transformations of Higgs families and with groups $A$ that can contain
antiunitary transformations, let us adopt the following convention:
a group $A_u \in G_u$ is called realizable if $A_u =  Aut(V)\cap G_u$,
and a group $A \in G$ is called realizable if $A =  Aut(V)\cap G$.

Note that in the case of 2HDM, where all possible reparametrization symmetries are known,
$\Z_2,\, (\Z_2)^2,\, (\Z_2)^3,\, U(1),\, U(1)\times \Z_2,\, SU(2)$, each of these symmetry groups is realizable,
which is most easily proved in the space of bilinears.

\section{Finding abelian groups}\label{section-strategy}

We now focus on abelian groups for NHDM and
describe in this Section the strategy that identifies all realizable abelian groups of the NHDM potential for any $N$.
This strategy can be outlined as follows: we first describe maximal abelian subgroups of $PSU(N)$,
then we explore their realizable subgroups, and finally we check
which of these groups can be extended by including antiunitary transformations.

\subsection{Heuristic explanations}

Before presenting the rigorous strategy which characterizes realizable abelian groups in NHDM,
we first give heuristic arguments which should facilitate understanding of the main idea.

Suppose we are given a potential $V$ of the $N$-doublet model and we want to find which phase rotations 
leave this potential invariant. Clearly, these phase rotations form a group which is a subgroup
of the group of all possible phase rotations of doublets, that is,
the group of all diagonal unitary $N\times N$ matrices acting in the space of doublets.
This group is $[U(1)]^N \subset U(N)$ and can be parametrized by $N$ parameters $\alpha_j \in [0,2\pi)$:
\be
\mbox{diag}[e^{i\alpha_1},\, e^{i\alpha_2},\, \dots ,\, e^{i\alpha_N}]\,.
\label{maximaltorusUN}
\ee
The potential is a collection of $k$ monomial terms each of the form $(\phi^\dagger_a \phi_b)$
or $(\phi^\dagger_a \phi_b)(\phi^\dagger_c \phi_d)$.
Upon a generic phase rotation (\ref{maximaltorusUN}), each monomial term gains its own phase rotation.
For example, $(\phi^\dagger_a \phi_b)$ with $a\not = b$ gains the phase $\alpha_b - \alpha_a$,
$(\phi^\dagger_a \phi_b)(\phi^\dagger_a \phi_c)$ with $a, b, c$ all distinct gains the phase $\alpha_b +\alpha_c - 2\alpha_a$, etc.
In short, each monomial gets a phase rotation which is a linear function of $\alpha$'s with integer coefficients:
$\sum_{j=1}^N m_j \alpha_j$. The vector of coefficients $m_j$ can be brought by permutation and overall sign change to one of the following forms:
\be
(1,\,-1,\,0,\,\dots)\,, \quad
(2,\,-2,\,0,\,\dots)\,, \quad
(2,\,-1,\,-1,\,\dots)\,, \quad
(1,\,1,\,-1,\,-1,\,\dots)\,,\label{fourtypes}
\ee
or a zero vector. Note that in all cases $\sum_j m_j = 0$.
Thus, the phase transformation properties of a given monomial are fully described by its vector $m_j$.
The phase transformation properties of the potential $V$, which is a colleciton of $k$ monomials, is characterized
by $k$ vectors $m_{1,j},\, m_{2,j},\, \dots,\, m_{k,j}$, each $m_{i,j}$ being of one of the types in (\ref{fourtypes}).

If we want a monomial to be invariant under a given transformation defined by phases $\{\alpha_j\}$,
we require that $\sum_{j=1}^N m_j \alpha_j = 2 \pi n$ with some integer $n$. If we want the entire potential to be invariant under a given 
phase transformation, we require this for each individual monomial. In other words, we require that 
there exist $k$ integers $\{n_i\}$ such that the phases $\{\alpha_j\}$ satisfy the following system of linear equations:
\be
\sum_{j=1}^N m_{i,j}\alpha_j = 2 \pi n_i\,,\quad \mbox{for all $1 \le i \le k$}\,.\label{systemUN}
\ee
Solving this system for $\{\alpha_j\}$ yields the phase rotations that leave the given potential $V$ invariant.

One class of solutions can be easily identified: if all $\alpha_j$ are equal, $\alpha_j = \alpha$, then (\ref{systemUN}) 
with $n_i = 0$ is satisfied for any $\alpha$. These solutions form the $U(1)$ subgroup inside $[U(1)]^N$ 
and simply reflect the fact that the potential is constructed from bilinears $(\phi^\dagger_a \phi_b)$.
These solutions become trivial when we pass from the $U(N)$ to the $SU(N)$ group of transformations.
However, there can exist additional solutions of (\ref{systemUN}). They form a group which remains non-trivial 
once we pass from $U(N)$ to $SU(N)$ and further to $PSU(N)$, which is the group of physically distinct 
Higgs-family reparametrization transformations.
It is these solutions which we are interested in.

In order to find these solutions, we note that a matrix with integer entries can be ``diagonalized'' 
by a sequence of elementary operations on its rows or columns: permutation, sign change, and addition
of one row or column to another row (column).
``Diagonalization'' for a non-square matrix means that the only entries $m_{i,j}$ that can 
remain non-zero are at $i = j$. After that, the system splits into $k$ equations on $N$ phases of the generic form
\be
m_{i,i}\tilde \alpha_i = 2\pi \tilde n_i\,, \quad \tilde \alpha_i \in [0,2\pi)\,,\quad \tilde n_i \in \Z\,,
\ee
with non-negative integer $m_{i,i}$.
If $m_{i,i}=0$, this equation has solution for any $\alpha_i$; thus $i$-th equation gives a factor $\Z$ to the 
symmetry group of the potential.
If $m_{i,i} =1$, then this equation has no non-trivial solution, and the $i$-th equation does not contribute to the symmetry group.
If $m_{i,i} =d_i > 1$, then this equation has $d_i$ solutions which are multiples of $\alpha_i = 2\pi/d_i$, 
and the $i$-th equation contributes the factor $\Z_{d_i}$ to the symmetry group.
The full symmetry group of phase rotations is then constructed as direct product of these factors.

Thus, the task reduces to studying which diagonal values of the matrix $m_{i,j}$ can arise in a model with $N$ doublets. 
For small values of $N$ this task can be solved explicitly,
while for general $N$ one must rely upon subtle properties of $m_{i,j}$ which stem from (\ref{fourtypes}).

\subsection{Maximal abelian subgroups of $PSU(N)$}

In the previous subsection we outlined the main idea of the classification strategy.
However we worked there in the group $U(N)$, while the reparametrization group is $PSU(N)$, which
essentially complicated the analysis.
Repeating this strategy for abelian subgroups of $PSU(N)$ is the subject of the remaining part of the text.

We start by reminding the reader of the definition of a maximal abelian subgroup.
A maximal abelian subgroup of $G_u$ (\ref{Gu}) is an abelian group that is not contained
in a larger abelian subgroup of $G_u$. A priori, there can be several maximal abelian subgroups in a given group.
Any subgroup of $G_u$ must be either a maximal one, or lie inside a maximal one.
Therefore, we first need to identify all maximal abelian subgroups of $G_u$ and then study their realizable subgroups.

If $G_u$ were $SU(N)$, then the situation would be simple. As we describe in Appendix A, all maximal abelian subgroups
of $SU(N)$ are the so called {\em maximal tori},
\be
[U(1)]^{N-1} = U(1)\times U(1) \times \cdots \times U(1)\,.
\label{maximaltorus1}
\ee
All maximal tori are conjugate inside $SU(N)$,
that is, given two maximal tori $T_1$ and $T_2$
there exists $g \in SU(3)$ such that $g^{-1}T_1g=T_2$.
It means that without loss of generality one could pick up a specific maximal torus, for example,
the one that is represented by phase rotations of individual doublets
\be
\mbox{diag}[e^{i\alpha_1},\, e^{i\alpha_2},\, \dots ,\, e^{i\alpha_{N-1}},\, e^{-i\sum\alpha_i}]\,,
\label{maximaltorus2}
\ee
and study its subgroups. The analysis would then proceed essentially as we explained in the previous subsection with an additional condition
that all $\alpha$'s sum to zero.

However, the group of distinct reparametrizations is $G_u=PSU(N)$, and it has a richer structure.
Consider the canonical epimorphism (i.e., surjective homomorphism)
\be
\overline{\phantom{G}}:\ SU(N)\to SU(N)/Z(SU(N))=\overline{SU(N)} \simeq PSU(N)\,,\label{epi}
\ee
where $Z(SU(N)) \simeq \Z_N$ is the center of $SU(N)$.
Given subgroup $\overline{A}$ of $PSU(N)$ we denote by $A \in SU(N)$
the complete preimage of $\overline{A}$, i.e., $A=\{g\in SU(N)\mid \overline{g}\in \overline{A}\}$.
If $\overline{A}$ is abelian, its complete preimage $A$ {\em does not have to be abelian}.
Specifically, in Appendix~\ref{appendix-maximal-abelian-PSU} we prove the following Theorem:
\begin{theorem}\label{AbelianSubgroupsPSUN}
Let $\overline{A}$ be a maximal abelian subgroup of $PSU(N)$, denote by $A$ the
complete preimage of $\overline{A}$ in $SU(N)$. Then one of the following
holds:
\begin{itemize}
 \item[{\em (1)}] $A$ is abelian and $A$ is conjugate to the subgroup of all
diagonal matrices in $SU(N)$.
\item[{\em (2)}] $A$ is a finite nilpotent group of class $2$ (i.e., the commutator $[A,A]$ lies in the center $Z(A)$), the exponent of
$A$ is divisible by $N$ and divides~$N^2$. Moreover, $Z(A)=Z(SU(N))$.
\end{itemize}
\end{theorem}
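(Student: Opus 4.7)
The plan is to distinguish two cases according to whether the preimage $A\subseteq SU(N)$ is itself abelian, exploiting in each case the defining consequence of maximality: $\overline{A}$ is self-centralising in $PSU(N)$, since any element centralising $\overline{A}$ together with $\overline{A}$ would generate a strictly larger abelian subgroup. Lifting to $SU(N)$ one obtains $C_{SU(N)}(A)\subseteq A$, so $C_{SU(N)}(A)=Z(A)$.

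Case~(1) is immediate. An abelian subgroup of $SU(N)$ consists of commuting unitary, hence simultaneously diagonalisable, matrices, so $A$ lies in a maximal torus $T$. Then $\overline{T}$ is abelian and contains $\overline{A}$, so maximality forces $\overline{T}=\overline{A}$, which lifts to $T=A$ because $Z(SU(N))\subseteq A$. All maximal tori of $SU(N)$ are conjugate to the diagonal one.

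For case~(2), assume $A$ is non-abelian. Since $\overline{A}$ is abelian, every commutator of elements of $A$ lies in the kernel $Z(SU(N))$ of the canonical projection, so $[A,A]\subseteq Z(SU(N))\subseteq Z(A)$ and $A$ is nilpotent of class $2$. The core task is to prove $Z(A)=Z(SU(N))$, which I would do by showing that $A$ acts irreducibly on the defining representation $\mathbb{C}^N$ and then invoking Schur's lemma, since the scalar matrices in $SU(N)$ are exactly $Z(SU(N))$. The main obstacle is the irreducibility step: if $A$ preserved a non-trivial decomposition $\mathbb{C}^N=V\oplus V'$, every block-scalar matrix $\zeta\,\mathrm{id}_V\oplus\eta\,\mathrm{id}_{V'}$ with $\zeta^{\dim V}\eta^{\dim V'}=1$ would centralise $A$ and hence lie in $A$, producing a non-trivial torus inside $Z(A)$; one must argue that such reducible configurations, possibly enriched by block-swaps when summands have equal dimension, allow strict enlargement of $\overline{A}$ in $PSU(N)$ while preserving abelianness, contradicting maximality. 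Making this precise for an arbitrary isotypic refinement is the technical crux.

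Finiteness and the exponent bound then follow uniformly. The continuous map $b\mapsto[a,b]$ sends the connected identity component $A^0$ into the discrete group $Z(SU(N))$ and vanishes at the identity, so $[A,A^0]=1$; hence $A^0\subseteq Z(A)=Z(SU(N))$, which is finite, forcing $A^0=1$ and $A$ finite. Since $Z(SU(N))\simeq\Z_N$ is a subgroup of $A$ containing an element of order $N$, the exponent of $A$ is divisible by $N$. Conversely, in a class-$2$ nilpotent group one has $[a,b^N]=[a,b]^N$; combined with $[A,A]\subseteq\Z_N$ this gives $[a,b^N]=1$ for all $a$, hence $b^N\in Z(A)=Z(SU(N))$ and $b^{N^2}=1$, so $\exp A$ divides $N^2$.
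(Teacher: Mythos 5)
Your overall architecture matches the paper's (split on whether the preimage $A$ is abelian; in the abelian case use simultaneous diagonalisability and maximality; in the non-abelian case note $[A,A]\le Z(SU(N))$ and then pin down $Z(A)$), and several of your peripheral arguments are fine or even cleaner than the paper's: the self-centralising observation $C_{SU(N)}(A)=Z(A)$ is correct, the exponent argument via $[a,b^N]=[a,b]^N$ in a class-$2$ group reproduces exactly the group-theoretic fact the paper only cites, and your finiteness argument via the identity component (once you add that a maximal abelian subgroup is closed, so that $A^0=1$ plus compactness gives finiteness) is a legitimate alternative to the paper's route through a maximal normal abelian subgroup $H$ with $C_A(H)=H$ and $A/H\hookrightarrow \mathrm{Aut}(H)$.

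However, there is a genuine gap, and you have named it yourself: the equality $Z(A)=Z(SU(N))$ in the non-abelian case is the heart of the theorem, and your proposal only states a plan (irreducibility plus Schur) while conceding that ``making this precise for an arbitrary isotypic refinement is the technical crux.'' Everything else in case (2) --- finiteness, the exponent dividing $N^2$ --- is derived \emph{from} this equality, so without it the non-abelian half of the theorem is unproved. The paper closes exactly this gap in its Lemma on $Z(A)$: assuming a nonscalar $\x\in Z(A)$, it places $A$ inside the block-diagonal centraliser $GU(N_1)\times\cdots\times GU(N_k)$ coming from the eigenspace decomposition of $\x$, shows via its Lemma 5, items (5) and (6), that maximality forces the projection of $A$ to each block to be a full maximal torus, and concludes that $\overline{A}$ is a maximal torus of $PSU(N)$, contradicting non-abelianness. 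Your sketch of ``enlarging $\overline{A}$ while preserving abelianness'' is pointing in the same direction, but note that merely exhibiting block scalars in $Z(A)$ gives no immediate contradiction (they may already lie in $A$); one really does have to run the projection-onto-blocks argument, including the isotypic/equal-dimension subtleties you flag. As written, the proposal does not constitute a proof of statement (2).
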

This theorem states that there are two sorts of maximal abelian groups in $PSU(N)$:
(1) maximal tori, which will be constructed below, (2)
certain finite abelian groups, which are not subgroups of maximal tori
and which must be treated separately.

\subsection{Maximal tori in $PSU(N)$}
Let us explicitly construct the maximal torus in $PSU(N)$ referred to in Theorem 1.
We first introduce some convenient notation. A diagonal unitary matrix acting in the space of Higgs doublets
and performing phase rotations of individual doublets, such as (\ref{maximaltorus2}),
will be written as a vector of phases:
\be
\left(\alpha_1,\,\alpha_2,\,\dots,\,\alpha_{N-1}, -\sum\alpha_i\right)\,.
\ee
In addition, if $M_1,\ldots,M_k$ are subsets of a group $G$, then $\langle M_1,\ldots,M_k\rangle$ denotes the subgroup generated by $M_1\cup\ldots\cup
M_k$ of $G$.
Then we construct a maximal torus in $SU(N)$
\be
T_0 = \langle U(1)_1,U(1)_2 , \cdots , U(1)_{N-1}\rangle\,,\label{T0}
\ee
where
\bea
U(1)_1 & = & \alpha_1(-1,\, 1,\, 0,\, 0,\, \dots,\, 0)\,,\nonumber\\
U(1)_2 & = & \alpha_2(-2,\, 1,\, 1,\, 0,\, \dots,\, 0)\,,\nonumber\\
U(1)_3 & = & \alpha_3(-3,\, 1,\, 1,\, 1,\, \dots,\, 0)\,,\nonumber\\
\vdots &  & \vdots\nonumber\\
U(1)_{N-1} & = & \alpha_{N-1}(-N+1, \, 1,\, 1,\, 1,\, \dots,\, 1)\,,
\label{groupsUi}
\eea
with all $\alpha_i \in [0,2\pi)$.
Clearly, for every $j$ we have $\langle U(1)_1,\ldots,U(1)_{j-1}\rangle\cap
U(1)_j=\{e\}$, whence $T_0$ is a direct product of $U(1)_1,\ldots,U(1)_{N-1}$.
In particular, any element $u \in SU(N)$ can be uniquely written as
\be
u = u_1(\alpha_1) u_2(\alpha_2) \cdots u_{N-1}(\alpha_{N-1})\,,\quad u_i \in
U(1)_i\,.
\ee
Moreover, the center $Z(SU(N))$ is contained in the last group and is generated by $\alpha_{N-1}=2\pi/N$.
One can therefore introduce $\overline{U(1)}_{N-1} = U(1)_{N-1}/Z(SU(N))$, which can be parametrized as
\be
\overline{U(1)}_{N-1} = \alpha_{N-1}\left(-{N-1 \over N}, \, {1\over N},\, \dots,\, {1 \over N}\right)\,,
\ee
where $\alpha_{N-1} \in [0,2\pi)$.
Since $Z(SU(N))\cap U(1)_j=1$ for $j=1,\ldots,N-2$, we identify these subgroups with their homomorphic images and omit the bar.
Thus the homomorphic image $T$ of $T_0$ can be written as
\be
T = U(1)_1\times U(1)_2 \times \cdots \times \overline{U(1)}_{N-1}\, . \label{maximal-torus-PSUN}
\ee

\subsection{Identifying symmetries of the potential}

Next we study which subgroups of the maximal torus $T$ can be realizable in the scalar sector of NHDM.

We start from the most general $T$-symmetric potential:
\be
V(T) = - \sum_a m_a^2(\phi_a^\dagger \phi_a) + \sum_{a,b} \lambda_{ab} (\phi_a^\dagger \phi_a)(\phi_b^\dagger \phi_b)
+ \sum_{a \not = b} \lambda'_{ab} (\phi_a^\dagger \phi_b)(\phi_b^\dagger \phi_a)\,,\label{Tsymmetric}
\ee
where among the $N(N-1)/2$ terms in the last sum there are only $2N-3$ algebraically independent ones,
\cite{Ivanov1}. Each term in this potential transforms trivially under the entire $T$. The important fact now is
that a sufficiently general potential of this form has no other unitary symmetry. In fact, an even stronger statement
is true:
\begin{theorem}\label{prop-T-symmetric}
Consider potential $V(T)$, (\ref{Tsymmetric}), defined by the set of coefficients $m_a^2$, $\lambda_{ab}$, $\lambda'_{ab}$.
\begin{enumerate}
\item
There exist coefficients such that the only unitary symmetries
of the potential are the phase rotations from $T$.
\item
If one constructs $V = V(T) + V_1$ by adding further terms that were absent in (\ref{Tsymmetric}),
then for any $V_1$ there still exist coefficients of $V(T)$ such that any unitary symmetry of $V$
belongs to $T$.
\end{enumerate}
\end{theorem}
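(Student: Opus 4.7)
Focus on the quadratic piece of $V(T)$, $V_2(\phi)=-\phi^\dagger M\phi$ with $M=\mathrm{diag}(m_1^2,\ldots,m_N^2)$. Any unitary symmetry $U\in G_u$ preserves $V_2$ separately (it is the degree-two homogeneous component of $V(T)$), so $U^\dagger MU=M$, i.e.\ $U$ commutes with $M$. Choosing the $m_a^2$ pairwise distinct gives $M$ a simple spectrum, which forces $U$ to be diagonal in the standard basis, hence $U\in T$. Combined with the inclusion $T\subseteq\mathrm{Aut}(V(T))$ which holds by construction, this yields $\mathrm{Aut}(V(T))\cap G_u=T$.

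\textbf{Plan for Part 2.} Given $V_1$, let $Q_1$ be its hermitian quadratic-in-$\phi$ contribution and put $Q=-M+Q_1$, with $M$ still to be chosen. Pick the $m_a^2$ so that $Q$ has pairwise distinct eigenvalues; generic $M$ achieves this for any fixed $Q_1$ (eigenvalue collisions are an algebraic condition). Then any symmetry $U$ of $V=V(T)+V_1$ commutes with $Q$ and is therefore diagonal in the eigenbasis of $Q$, i.e.\ $U$ lies in a conjugate maximal torus $T'=gTg^{-1}$, where $g$ diagonalises $Q$. If $Q_1=0$, then $g=1$ and $T'=T$, and the quadratic argument alone suffices. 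If $Q_1\ne 0$, we must also use the quartic part: rewritten in the eigenbasis of $Q$, the sum $V_4(T)+V_{1,4}$ decomposes into a family of $T'$-weight monomials whose coefficients depend linearly on the original $\lambda_{ab},\lambda'_{ab}$. A generic choice of those quartic coefficients makes enough distinct $T'$-characters appear that the only $U\in T'$ annihilating every non-invariant weight must lie in the finite subgroup $T\cap T'\subseteq T$, so $\mathrm{Aut}(V)\cap G_u\subseteq T$.

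\textbf{Where the difficulty lies.} The nontrivial content is the ``generic suffices'' claim in Part 2, because the $T'$-characters generated by the basis change depend on $g$, which itself depends on both $V_1$ and on the chosen $m_a^2$. A clean rigorous route is a dimension argument in the coefficient space $\mathcal V$ of $V(T)$. For each fixed $U\in G_u\setminus T$, the condition ``$U$ is a symmetry of $V(T)+V_1$'' reads as the single non-trivial linear equation $V(T)(U\phi)-V(T)(\phi)=V_1(\phi)-V_1(U\phi)$ in the coefficients of $V(T)$, cutting out a proper affine subspace $B_U\subsetneq\mathcal V$. One then shows that $\bigcup_{U\in G_u\setminus T}B_U\subsetneq \mathcal V$ by a dimension count on the incidence variety inside $(G_u\setminus T)\times\mathcal V$, using compactness of $G_u$ and the algebraic nature of all constructions. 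Any $V(T)$ outside this union realises the theorem.
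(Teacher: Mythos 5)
Your Part~1 is correct and in fact more economical than the paper's route (the paper derives Part~1 as a special case of Part~2): a unitary symmetry preserves the degree-$(1,1)$ component $-\phi^\dagger M\phi$ separately, and a simple spectrum of $M=\mathrm{diag}(m_a^2)$ forces $U$ to be diagonal, hence $U\in T$.

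Part~2, however, has a genuine gap. Your first sketch (pass to the eigenbasis torus $T'$ of $Q=-M+Q_1$ and then invoke ``generic quartic coefficients'') never explains why the quartic conditions push a symmetry from $T'$ into $T$; the quartic part of $V(T)$ is tied to the canonical basis, not to $T'$, and no mechanism connecting the two is given. Your fallback, the incidence-variety dimension count, fails in the naive form you state. Each $B_U$ is indeed a proper affine subspace for $U\notin T$, but its codimension is not uniformly large: for $U$ a transposition of two doublets (times phases), $B_U$ is cut out by only about $2N-2$ equations, so $\dim B_U\approx N^2-N+2$, while $\dim(G_u\setminus T)=N^2-1$ and $\dim\mathcal V=N(N+1)$. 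Hence $\dim(G_u\setminus T)+\max_U\dim B_U\gg\dim\mathcal V$, and the bound on the projection of the incidence variety gives nothing. To make this route work you would have to stratify $G_u\setminus T$ by $\mathrm{codim}\,B_U$ and verify $\dim(\text{stratum})<\mathrm{codim}\,B_U$ on every stratum (including elements arbitrarily close to $T$, where the codimension degenerates); that stratified estimate is precisely the hard content and is absent.

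For contrast, the paper avoids any global count over $G_u$ by extracting from the potential \emph{two} hermitian matrices that any unitary symmetry must commute with: $Y_{ab}$ from the quadratic part and the partial trace $Z_{ab}\propto\sum_d Z_{abdd}$ (made traceless) from the quartic part. The symmetry group is then contained in $T_Y\cap T_Z$, the phase rotations of common eigenvectors. The diagonals of $Y$ and $Z$ are \emph{independently} adjustable through $m_a^2$ and $\lambda_{ab}$ while their off-diagonal parts are fixed by $V_1$, and a diagonal perturbation moves every eigenvector except those already in the canonical basis; so one can always arrange that the only common eigenvectors are canonical, forcing $T_Y\cap T_Z\subseteq T$. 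If you want to complete your argument, the cleanest fix is to adopt this two-matrix reduction: it replaces the union over all of $G_u\setminus T$ by a finite-dimensional genericity statement about two hermitian matrices with prescribed off-diagonal parts.
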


\begin{proof}
The first assertion is a particular case of the second, which we now prove.
Consider a generic potential $V$ in the space of bilinears (\ref{Vadjoint}).
Any unitary transformation of the Higgs families corresponds to an orthogonal
transformation in the $r_i$-space.
Therefore, any unitary symmetry of the potential must leave invariant each of the five terms in (\ref{Vadjoint}).
Consider, specifically, the second and the forth terms:
\be
V = - Y_{ab}(\phi^\dagger_a \phi_b) + Z_{ab}(\phi^\dagger_a \phi_b) \sum_c (\phi^\dagger_c \phi_c) + \dots\,,\label{twoterms}
\ee
where we omitted the remaining terms and introduced the tensor
\be
Z_{ab} = {2 \over N}\sum_d Z_{abdd}-{2 \over N^2}\delta_{ab}\sum_{c,d}Z_{ccdd}\,,
\ee
which is basically the $Z^{(2)}_{ab}$ of \cite{Davidson} made traceless.
Both $Y_{ab}$ and $Z_{ab}$ are hermitean matrices, and their diagonal values can be arbitrarily changed
by changing $m_{a}^2$ and $\lambda_{ab}$ in (\ref{Tsymmetric}).
In particular, one can adjust them so that the spectra of $Y_{ab}$ and $Z_{ab}$ are non-degenerate.
Their eigenvectors can be then used to construct two orthonormal bases in the space of Higgs doublets,
$\{e^{(Y)}_i\}$ and $\{e^{(Z)}_i\}$, respectively ($i = 1,\dots, N$).
Let us also introduce the canonical basis $\{e^{(0)}_i\}$ consisting of vectors of the form $(0,\dots,1,\dots,0)$,
so that the maximal torus $T$ consists of phase rotations of these vectors.

The symmetry group of the first term in (\ref{twoterms}) is the maximal torus $T_Y$ of phase rotations
of individual eigenvectors $e^{(Y)}_i$.
The symmetry group of the second term in (\ref{twoterms}) is the maximal torus $T_Z$ of phase rotations
of individual eigenvectors $e^{(Z)}_i$.
The symmetry group of (\ref{twoterms}) is, therefore, $T_Y \cap T_Z$, the phase rotations of common eigenvectors
of the two matrices (and the symmetry group of the entire $V$ is its subgroup).

Take an eigenvector of $Y_{ab}$. Its orientation can be changed by readjusting coefficients $m_a^2$
if and only if it does not belong to the canonical basis. The same applies to $Z_{ab}$.
Therefore, even if $Y_{ab}$ and $Z_{ab}$ had common eigenvectors not belonging to the canonical basis,
one could avoid this coincidence by readjusting the coefficients $m_a^2$ and $\lambda_{ab}$.
After this procedure, the only common eigenvectors of $Y_{ab}$ and $Z_{ab}$ will be those
belonging to the canonical basis.
Therefore, the symmetry group of (\ref{twoterms}), and consequently, of the full potential will be a subgroup
of $T$.
\end{proof}

The theorem just proved guarantees that when we start from the $T$-symmetric potential (\ref{Tsymmetric})
and add more terms, we will never generate any new unitary symmetry that was not already present in $T$.
This is the crucial step in proving that the groups described below are realizable.
Our task now is to find which subgroups of $T$ can be obtained in this way.

Consider a bilinear $\phi_a^\dagger \phi_b$, $a\not = b$. It gains a phase change under $T$ (\ref{maximal-torus-PSUN})
which linearly depends on the angles $\alpha_i$:
\be
\phi_a^\dagger \phi_b \to \exp[i(p_{ab}\alpha_1 + q_{ab}\alpha_2 + \dots + t_{ab}\alpha_{N-1})]\cdot \phi_a^\dagger \phi_b\label{pq-generic-NHDM}
\ee
with some integer coefficients $p_{ab},\, q_{ab},\, \dots,\, t_{ab}$.
Note that all coefficients are antisymmetric in their indices: $p_{ba}= - p_{ab}$, etc.
These coefficients can be represented by real antisymmetric matrices with integer values, or graphically,
as labels of the edges of $N-1$ {\em oriented graphs}, one for each $U(1)$ group.
Each such graph has $N$ vertices, corresponding to doublets $\phi_a$; all vertices are joined with arbitrarily oriented edged,
orientation indicated by an arrow. An edge oriented from $\phi_b$ towards $\phi_a$ (edge $b \to a$) is associated with
the bilinear $\phi_a^\dagger\phi_b$ and is labeled by $p_{ba}$ in the first graph, $q_{ba}$ in the second graph, etc.
Examples of these graphs will be shown later for 3HDM and 4HDM.

The Higgs potential is a sum of monomial terms which are linear or quadratic in $\phi_a^\dagger \phi_b$.
Consider one such term and calculate its coefficients $p,\dots, t$.
Let us first focus on how this term depends on any single $U(1)_i$ subgroup of $T$.
There are two possibilities depending on the value of the $i$-th coefficient:
\begin{itemize}
\item
If the coefficient $k$ in front of $\alpha_i$ is zero, this terms is $U(1)_i$-symmetric.
\item
If the coefficient $k \not = 0$, then this term is symmetric under the $\Z_{|k|}$ subgroup of $U(1)_i$-group
generated by phase rotations by $2\pi/|k|$.
\end{itemize}
However, even if a given monomial happens to have finite symmetry groups with respect to each single $U(1)_i$,
its symmetry group under the entire $T$ is still continuous: for any set of coefficients one can adjust
angles $\alpha_i$ in such a way that $p_{ab}\alpha_1 + \dots + t_{ab}\alpha_{N-1} = 0$.
Therefore, when studying symmetries of a given term or a sum of terms, we cannot limit ourselves
to individual $U(1)_i$ groups but must consider the full maximal torus.

The strategy presented below guarantees that we find all possible realizable subgroups of the maximal torus $T$, both finite and infinite.

Consider a Higgs potential $V$ which, in addition to the $T$-symmetric part (\ref{Tsymmetric}) contains $k\ge 1$ additional terms,
with coefficients $p_1,\,q_1,\,\dots t_1$ to $p_k,\,q_k,\,\dots t_k$.
This potential defines the following $(N-1)\times k$ matrix of coefficients:
\be
\label{XV}
X(V) = \left(\begin{array}{cccc}
p_1 & q_1 & \cdots & t_1\\
p_2 & q_2 & \cdots & t_2\\
\vdots & \vdots && \vdots \\
p_k & q_k & \cdots & t_k
\end{array}
\right) = 
\left(\begin{array}{ccc}
m_{1,1} & \cdots & m_{1,N-1}\\[2.5mm]
\vdots & \vdots & \vdots \\[2.5mm]
m_{k,1} &  \cdots & m_{k,N-1}
\end{array}
\right)\,.
\ee
Here the second form of the matrix agrees with the notation of (\ref{systemUN}).
The symmetry group of this potential can be derived from the set of non-trivial solutions for $\alpha_i$ of the following equations:
\be
X(V) \left(\begin{array}{c} \alpha_1 \\ \vdots \\ \alpha_{N-1} \end{array}\right)
= \left(\begin{array}{c} 2\pi n_1 \\ \vdots \\2 \pi n_{k} \end{array}\right)\,,\label{XVeq}
\ee
There are two major possibilities depending on the rank of this matrix.
\begin{itemize}
\item
{\bf Finite symmetry group.}
If rank$X(V)=N-1$, then there is no non-trivial solution of the equation (\ref{XVeq}) with the trivial right-hand side (i.e. all $n_i = 0$).
Instead, there exists a unique solution for any non-trivial set of $n_i$, and all such solutions form the finite group of phase rotations
of the given potential.

To find which symmetry groups can be obtained in this way,
we take exactly $N-1$ monomials, so that the matrix $X(V)$ becomes a square matrix with a non-zero determinant.
It is known that any square matrix with integer entries can be diagonalized by a sequence of the following
operations: swapping two rows or two columns,
adding a row (a column) to another row (column) and multiplying a row (a column) by $-1$.
After diagonalization, the matrix $X(V)$ becomes diag$(d_1,\dots, d_{N-1})$, where $d_i$ are positive non-zero integers.
This matrix still defines the equation (\ref{XVeq}) for $\alpha_i'$, which are linear combinations of $\alpha_i$'s and
with $n_i^\prime \in \Z$. Therefore, the finite symmetry group of this matrix is $\Z_{d_1} \times \cdots \times \Z_{d_{N-1}}$
(where $\Z_1$ means no non-trivial symmetry).

Note also that each of the allowed manipulations conserves the absolute value of the determinant of $X(V)$.
Therefore, even before diagonalization one can calculate the order of the finite symmetry group as $|\det X(V)|$.

This derivation leads us to the strategy that identifies all finite subgroups of torus realizable as symmetry groups of the Higgs potential
in NHDM: write down all possible monomials with $N$-doublets, consider all possible subsets with exactly $N-1$ distinct monomials,
construct the matrix $X$ for this subset and find its symmetry group following the above scheme.
Although this strategy is far from being optimal, its algorithmic nature allows it to be easily implemented in a machine code.
\item
{\bf Continuous symmetry group.}
If $\mathrm{rank} X(V) < N-1$, so that $D=N-1-\mathrm{rank} X(V) > 0$, then there exists a $D$-dimensional subspace
in the space of angles $\alpha_i$, which solves the equation (\ref{XVeq}) for the trivial right-hand side.
One can then focus on the orthogonal complement of this subspace, where no non-trivial solution
of the homogeneous equation is possible, and repeat the above strategy
to find the finite symmetry group $G_D$ in this subspace. The symmetry group of the potential
is then $[U(1)]^D \times G_D$.
\end{itemize}

\subsection{Identifying symmetries using the dual group}

The strategy just exposed can be put on a firmer algebraic ground with the classic formalism of the {\em Pontryagin duality}, \cite{Pontr}.

Let $T$ be a subgroup of all diagonal matrices of $PSU(N)$, i.e., $T$ is a maximal torus, and consider the decomposition $T=U_1\times\ldots\times U_n$, where $n=N-1$. By construction, $U_i=\{u_i(\alpha_i)\mid \alpha_i\in [0,2\pi)\}$. By $K$ we denote the factor group $\mathbb{R}/\mathbb{Z}$ (we use additive notation for the operation in this group) and, given an abelian group $A$ the homomorphism $\chi:A\rightarrow K$ is called a {\em character} of $A$. The set of all characters of $A$ become an abelian group under the operation given by $(\chi_1+\chi_2)(a)=\chi_1(a)+\chi_2(a)$ and we denote this group by $X(A)$. 
This is the dual group of $A$. If $X(A)$ is found, then $A$ can be recovered by considering $X(X(A))$ the dual group of $X(A)$.

Consider $\chi_i\in X(T)$ defined on generators by 
$$
\chi_i(u_j(\alpha_j))=\left\{
\begin{array}{rl}
\frac{\alpha_j}{2\pi},&\text{ if }i=j;\\
0,&\text{ if }i\neq j.                              
\end{array}
\right.
$$
Clearly the cyclic group $X_i:=\langle \chi_i\rangle$ is isomorphic to $X(U_i)$ and $X(T)=X_1\times\ldots\times X_n$, i.e., $\chi_1,\ldots,\chi_n$ are free generators of a free abelian group $X(T)$. Now consider a potential $V$, denote $X(V)$ by $M$ (recall that $M$ is a $k\times n$-matrix with integer items) and denote the items of $M$ by $m_{i,j}$. An element $x=u_1(\alpha_1)\cdot\ldots\cdot u_n(\alpha_n)\in T$ lies in $\mathrm{Aut}(V)$ if and only if (\ref{XVeq}) holds.
That is, $x\in \mathrm{Aut}(V)$ if and only if for each $i$  we have $\xi_i(x)=0$, where the character $\xi_i$ is defined by $\xi_i=\sum_{j=1}^n m_{i,j}\chi_j$. 

Consider a subgroup $H=\langle \xi_1,\ldots,\xi_k\rangle$ of $X(T)$. The above arguments imply that $H$ is the annulator of $\mathrm{Aut}(V)$, 
and that $X(T)/H\simeq X(\mathrm{Aut}(V))$. If we diagonalize $M$ by the standard transformations we obtain the following matrix 
\begin{equation*}
D=\left(
\begin{array}{ccccc}
d_1&0&0&\ldots&0\\
0&d_2&0&\ldots&0\\
\ldots&\ldots&\ldots&\ldots\\
0&\ldots &d_k&\ldots&0\\
\end{array}
\right),
\end{equation*}
where $d_{i-1}$ divides $d_i$ for each $i$. Now \cite[Theorem~8.1.1]{Karg} implies that  
$$X(T)/H\simeq \mathbb{Z}_{d_1}\times\ldots\times\mathbb{Z}_{d_k}\times\mathbb{Z}^{n-k},$$
where $\mathbb{Z}_{d_i}$ is a cyclic group of order $d_i$ and $\mathbb{Z}^{n-k}$ is a direct product of $n-k$ copies of $\mathbb{Z}$. 
In view of the Pontryagin duality between groups and characters, we obtain that 
$$\mathrm{Aut}(V)\simeq \mathbb{Z}_{d_1}\times\ldots\times\mathbb{Z}_{d_k}\times U\,,$$
where $U$ is an $(n-k)$-dimensional torus.

\subsection{Antiunitary transformations}\label{subsection-antiunitary-general}

Having found the list of realizable subgroups of torus $T$, we can ask whether these groups
can be extended to larger abelian groups that would include not only unitary but also antiunitary transformations.
Here we describe the strategy that solves this problem.

As described above, one can define the action of the $CP$-transformation $J$ on $PSU(N)$
given by (\ref{JactiononUN}): $J\cdot U \cdot J = U^*$.
This action can be restricted to the maximal torus $T$ (\ref{maximal-torus-PSUN}) and even further, to any subgroup $A \subset T$
(indeed, if $a \in A$, then $a^* \in A$). However, $J$ does not commute with a generic element of $T$,
therefore the group $\langle A, J\rangle$ is not, in general, abelian.

In order to embed $A$ in a larger abelian group, we must find
an antiunitary transformation $J' = b\cdot J$ which commutes with any element $a \in A$:
\be
J'a(J')^{-1} = a\quad \Leftrightarrow \quad bJaJb^{-1}=a \quad \Leftrightarrow \quad ba^{-1}b^{-1}=a \quad \Leftrightarrow\quad  b=aba.\label{Jprime}
\ee
The last expression here is a linear matrix equation for the matrix $b$ at any given $a$.
Since $a$ is diagonal, $(aba)_{ij} = a_{ii}a_{jj}b_{ij}$, so whenever $a_{ii}a_{jj} \not = 1$,
one must set $b_{ij}$ to zero.

If at least one matrix $b$ satisfying (\ref{Jprime}) is found, all the other matrices can be constructed with the help of the last form of this equation.
Suppose $b'=xb$ also satisfies this equation; then, $axa^{-1}=x$. Therefore, $x$ can be any unitary matrix from $PSU(N)$ commuting with $a$,
that is, $x$ centralizes the chosen abelian group $A$.

In summary, the strategy for embedding a given abelian group $A \subset T$ into a larger abelian group $A_{CP}$
which includes antiunitary transformations proceeds in five steps:
\begin{itemize}
\item
Find one $b$ which solves the matrix equation (\ref{Jprime}) for all $a \in A$.
\item
Find all $x \in PSU(N)$ which commute with each element of $A$, that is, construct the centralizer of $A$ in $PSU(N)$.
\item
Find restrictions placed on $x$'s by the requirement that product of two different antiunitary transformations $(x_1J')$ and $(x_2 J')$ belongs to $A$.
This procedure can result in several distinct groups $A_{CP}$.
\item
Find which among the $A$-symmetric terms in the potential are also symmetric under some of $A_{CP}$;
drop terms which are not.
\item
Check that the resulting collection of terms is not symmetric under a larger group of {\em unitary} transformations,
that is, $A$ is still a realizable symmetry group of this collection of terms.
\end{itemize}

As an exercise, let us apply this strategy to the full maximal torus $T$.
A generic $a \in T$ acting on a doublet $\phi_i$ generates a non-trivial phase rotation
$\psi_i(\alpha_1,\dots,\alpha_{N-1})$ which can be reconstructed from (\ref{groupsUi}).
The equation (\ref{Jprime}) then becomes
\be
e^{i(\psi_i+\psi_j)}b_{ij} = b_{ij}\,.
\ee
Since $\psi_i + \psi_j \not = 0$ for any $i, j$, the only solution to this equation is $b_{ij}=0$
for all $i$ and $j$.
This means that there is no $J'$ that would commute with every element of the torus $T$.
Thus, $T$ cannot be embedded into a larger abelian group $T_{CP}$ that would include antiunitary transformations.

In order to avoid possible confusion, we note that in fact $J$ {\em is} a symmetry of the $T$-symmetric potential.
However, the symmetry group, generated by $T$ and $J$, is a semidirect product $T\rtimes J$, and it is nonabelian.

\section{Examples of 3HDM and 4HDM}\label{section-3HDM-4HDM}

In this Section we illustrate the general strategy presented above with the particular cases of 3HDM and 4HDM.
For 3HDM we give the full list of abelian groups realizable as symmetry groups of the Higgs potential
and show explicit examples of such potentials; for 4HDM we do it only for the finite groups of unitary transformations.

\subsection{Maximal torus for 3HDM and its subgroups}

In the 3HDM the representative maximal torus $T \subset PSU(3)$ is parametrized as
\be
T = U(1)_1\times U(1)_2\,,\quad U(1)_1 = \alpha(-1,\,1,\,0)\,,\quad U(1)_2 = \beta\left(-{2 \over 3},\, {1 \over 3},\, {1 \over 3}\right)\,,
\label{3HDM-maximaltorus}
\ee
where $\alpha,\beta \in [0,2\pi)$.
The most general Higgs potential symmetric under $T$ is
\bea
\label{3HDMpotential-Tsymmetric}
V &=& - m_1^2 (\phi_1^\dagger \phi_1) - m_2^2 (\phi_2^\dagger \phi_2) - m_3^2 (\phi_3^\dagger \phi_3)\\
&+& \lambda_{11} (\phi_1^\dagger \phi_1)^2 + \lambda_{22} (\phi_2^\dagger \phi_2)^2 + \lambda_{33} (\phi_3^\dagger \phi_3)^2\nonumber\\
&+& \lambda_{12} (\phi_1^\dagger \phi_1) (\phi_2^\dagger \phi_2)
+ \lambda_{23} (\phi_2^\dagger \phi_2) (\phi_3^\dagger \phi_3)
+ \lambda_{13} (\phi_3^\dagger \phi_3) (\phi_1^\dagger \phi_1)
\nonumber\\
&+& \lambda'_{12} |\phi_1^\dagger \phi_2|^2 + \lambda'_{23} |\phi_2^\dagger \phi_3|^2 + \lambda'_{13} |\phi_3^\dagger \phi_1|^2\,.\nonumber
\eea
There are six bilinear combinations of doublets transforming non-trivially under $T$.
The way they transform under $U(1)_1$ and $U(1)_2$ is described by two coefficients $p$ and $q$ defined in (\ref{pq-generic-NHDM}).
Here is the list of these coefficients for the three independent bilinears:
\be
\begin{tabular}{c | cc}
& $p$ & $q$ \\
\hline
$(\phi_2^\dagger \phi_1)$ & $-2$ & $-1$ \\
$(\phi_3^\dagger \phi_2)$ & $1$ & $0$ \\
$(\phi_1^\dagger \phi_3)$ & $1$ & $1$
\end{tabular}
\label{pq-explicit-3HDM}
\ee
For the conjugate bilinears, the coefficients $p$ and $q$ carry the opposite signs with respect to (\ref{pq-explicit-3HDM}).
These coefficients are shown graphically
as labels of the edges of two oriented graphs shown in Fig.~\ref{graph3HDM}.

\begin{figure} [ht]
\centering
\includegraphics[height=3cm]{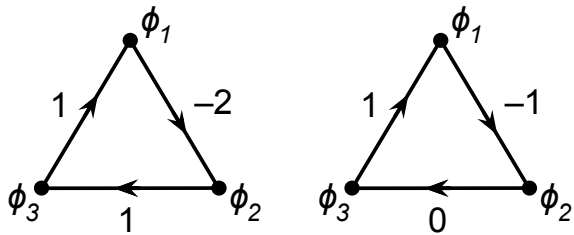}
\caption{Coefficients $p$ and $q$ as labels of triangles representing bilinears
in 3HDM}
\label{graph3HDM}
\end{figure}

Before applying the general strategy described in the previous Section, let us
check which finite symmetry groups can arise as subgroups of either $U(1)_1$ or $U(1)_2$ individually.

Consider first $U(1)_1$. The coefficient $p$ of any monomial can be obtained either by picking the labels from the first graph
in Fig.~\ref{graph3HDM} directly or by summing two such labels, multiplied by $-1$ when needed.
In this way we can obtain any $|p|$ from $0$ to $4$. For a monomial with $p=0$, the symmetry group is the entire $U(1)_1$.
For $|p|=1$, there is no non-trivial symmetry. For $|p| = 2,3,4$, we obtain the cyclic group $\Z_{|p|}$.
In each case it is straightforward to construct the monomials with a given symmetry;
for example, $(\phi_1^\dagger \phi_3)(\phi_2^\dagger \phi_3)$ and its conjugate are $U(1)_1$-symmetric,
while $(\phi_1^\dagger \phi_2)(\phi_1^\dagger \phi_3)$,  $(\phi_1^\dagger \phi_2)(\phi_3^\dagger \phi_2)$
and their conjugates are symmetric under the $\Z_3$-group with a generator
\be
a = (\omega,\,1,\,\omega^{-1})\,,\quad \omega \equiv \exp\left(2\pi i/3\right)\,.\label{Z3generator}
\ee
In the case of the group $U(1)_2$, the labels can sum up to $|q| = 0,1,2$. Thus, the largest finite group here is $\Z_2$.

As we mentioned in the previous Section, one cannot limit oneself to subgroups of individual $U(1)_i$ factors
or to direct products of such subgroups.
In order to find all realizable groups, one has to write the full list of possible monomials and then calculate the symmetry
group of all distinct pairs (for $N=3$) of monomials. For example, if the two monomials are
$v_1 = \lambda_1(\phi_1^\dagger \phi_2)(\phi_1^\dagger \phi_3)$ and $v_2 = \lambda_2(\phi_2^\dagger \phi_1)(\phi_2^\dagger \phi_3)$,
then the matrix $X(v_1+v_2)$ defined in (\ref{XV}) has form
\be
X(v_1+v_2) = \left(\begin{array}{cc} 3 & 2 \\ -3 & -1 \end{array}\right)\,.
\ee
It can be diagonalized by adding the first row to the second, and then subtracting the second row twice from the first:
\be
\left(\begin{array}{cc} 3 & 2 \\ -3 & -1 \end{array}\right) \to \left(\begin{array}{cc} 3 & 2 \\ 0 & 1 \end{array}\right)
\to \left(\begin{array}{cc} 3 & 0 \\ 0 & 1 \end{array}\right)\,.
\ee
The diagonal for of the matrix obtained indicates that the symmetry group of the potential is $\Z_3$. The solution of the equation
\be
X(v_1+v_2) \left(\begin{array}{c} \alpha \\ \beta \end{array}\right) = \left(\begin{array}{c} 2\pi n_1 \\ 2 \pi n_2 \end{array}\right)\label{Xv1v2eq}
\ee
yields $\alpha = 2\pi/3\cdot k$, $\beta=0$,
which implies the transformation matrix of the doublets (\ref{Z3generator}).

In 3HDM there are, up to complex conjugation, three bilinears and nine products of two bilinears transforming non-trivially under $T$.
It is a straightforward exercise to check all possible pairs of monomials; in fact, by using permutations of the doublets
the number of truly distinct cases is rather small. This procedure reveals just one additional finite group $\Z_2 \times \Z_2$,
which arises when at least two terms among $(\phi_1^\dagger\phi_2)^2$, $(\phi_2^\dagger\phi_3)^2$, $(\phi_3^\dagger\phi_1)^2$
are present. This group is simply the group of sign flips of individual doublets.

Thus, we arrive at the full list of subgroups of the maximal torus realizable as the symmetry groups of the Higgs potential in 3HDM:
\be
\Z_2,\quad \Z_3,\quad  \Z_4,\quad \Z_2\times \Z_2,\quad U(1),\quad U(1)\times \Z_2,\quad U(1)\times U(1)\,. \label{list3HDM}
\ee
Most of these groups were identified in \cite{Ferreira} in the search of ``simple'' symmetries of 3HDM scalar potential.
In that work a symmetry were characterized not by its group, as in our paper, but by providing
a single unitary transformation $S$ and then reconstructing the pattern in the Higgs potential which arises
after requiring that it is $S$-symmetric.
In certain cases the authors of  \cite{Ferreira} found that the potential is symmetric under a larger group than $\langle S\rangle$,
in accordance with the notion of realizable symmetry discussed above.

The explicit correspondence between the seven symmetries $S_1,\dots, S_7$ of \cite{Ferreira} and the list (\ref{list3HDM})
is the following:
\bea
&&S_1 \to \Z_2\,,\quad S_2 \to U(1) \ \mbox{realized as}\ U(1)_2\,,\quad S_3 \to \Z_3\,,\label{correspondence}\\
&&S_4 \to \Z_4\,,\quad S_5 \to U(1) \ \mbox{realized as}\ U(1)_1\,,\quad S_6 \to U(1)\times \Z_2\,,
\quad S_7 \to U(1)\times U(1)\,.\nonumber
\eea
In addition to these symmetries, our list (\ref{list3HDM}) contains one more group $\Z_2\times \Z_2$,
which was not found in \cite{Ferreira} because it does not correspond to a ``simple'' symmetry.

Let us also explicitly write the potentials which are symmetric under each group in (\ref{list3HDM}).
\begin{itemize}
\item
${\bf U(1)\times U(1) = T}$. The most general $T$-symmetric potential of 3HDM is given by (\ref{3HDMpotential-Tsymmetric}).
\item
${\bf U(1)}$. This group can be realized in two non-equivalent ways, which are conjugate either to $U(1)_1$ or $U(1)_2$ in (\ref{3HDM-maximaltorus}).
The distinction between the two realizations lies in the eigenspace decomposition:
a $U(1)_2$-type group has a two-(complex)-dimensional space of eigenvectors, while eigenspaces
of a $U(1)_1$-type group are all one-dimensional. This leaves more freedom in constructing a $U(1)_2$-symmetric potential
than a $U(1)_1$-symmetric one.
Specifically, the general $U(1)_1$-invariant potential contains, in addition to (\ref{3HDMpotential-Tsymmetric}), the following terms:
\be
\lambda_{1323}(\phi_1^\dagger\phi_3)(\phi_2^\dagger\phi_3) + h.c.\label{U11potential}
\ee
while the general $U(1)_2$-invariant potential contains
\bea
&&-m_{23}^2(\phi_2^\dagger\phi_3) + \left[\lambda_{1123} (\phi_1^\dagger\phi_1) + \lambda_{2223} (\phi_2^\dagger\phi_2)
+ \lambda_{3323} (\phi_3^\dagger\phi_3)\right](\phi_2^\dagger\phi_3)\nonumber\\
&&+ \lambda_{2323}(\phi_2^\dagger\phi_3)^2 + \lambda_{2323}(\phi_2^\dagger\phi_1)(\phi_1^\dagger\phi_3) + h.c. \label{U12potential}
\eea
It must be stressed that these potentials are written for the specific convention of groups $U(1)_1$ and $U(1)_2$ used in
(\ref{3HDM-maximaltorus}). This convention reflected a specific basis used to describe the structure of the torus $T$.
In other bases, the explicit terms symmetric under a $U(1)_1$ or $U(1)_2$-type groups will look differently.
For example, the term $(\phi_2^\dagger\phi_1)(\phi_3^\dagger\phi_1)$ is symmetric under a $U(1)_1$-type
transformation with phases $(0,\alpha,-\alpha)$.
To this end, we note that reconstructing the symmetry group of any given potential is a separate and difficult task
which is not addressed in the present paper (here we only classify realizable symmetries and give examples
of their realizations).
\item
${\bf U(1) \times \Z_2}$.
Looking back to (\ref{U11potential}), one might be tempted to think that the true symmetry group of this term is not $U(1)$ but $U(1)\times\Z_2$,
because this term is also invariant under sign flip of $\phi_3$. However, inside $SU(3)$, a transformation with phases $(0,0,\pi)$
is equivalent to $(-\pi,\pi,0)$ which in already included in $U(1)_1$.

A potential whose true symmetry group is $U(1) \times \Z_2$ is given by (\ref{U12potential}) with all coefficients set to zero
except for $\lambda_{2323}$. The corresponding term, $(\phi_2^\dagger\phi_3)^2$, is symmetric not only under the full $U(1)_2$,
but also under $(-\pi,\pi,0)$, which generates a $\Z_2$ subgroups inside $U(1)_1$.
\item
${\bf \Z_4}$.
The potential symmetric under $\Z_4$ contains, in addition to (\ref{3HDMpotential-Tsymmetric}), the following terms:
\be
\lambda_{1323}(\phi_1^\dagger\phi_3)(\phi_2^\dagger\phi_3) + \lambda_{1212}(\phi_1^\dagger\phi_2)^2 + h.c.
\ee
In accordance with the general discussion, it must contains $N-1=2$ distinct terms, as required for a finite symmetry group.
Again, this set of terms is specific for the particular choice of the $(U(1)_1,U(1)_2)$-basis on the maximal torus.
\item
${\bf \Z_3}$.
The potential symmetric under $\Z_3$ contains
\be
\lambda_{1232}(\phi_1^\dagger\phi_2)(\phi_3^\dagger\phi_2) +
\lambda_{2313}(\phi_2^\dagger\phi_3)(\phi_1^\dagger\phi_3) +
\lambda_{3121}(\phi_3^\dagger\phi_1)(\phi_2^\dagger\phi_1) + h.c.
\ee
In fact, any pair of the three terms is already sufficient to define a $\Z_3$-symmetric potential.
Note also that different $(U(1)_1,U(1)_2)$-basis on the maximal torus lead to the same $\Z_3$-group, because
inside $PSU(3)$ the following groups are equal:
\be
\langle\left(1,\omega,\omega^2\right)\rangle \simeq
\langle\left(\omega,\omega^2,1\right)\rangle \simeq
\langle\left(\omega^2,1,\omega\right)\rangle\,,\quad \omega \equiv \exp(2\pi i/3)\,.
\ee
\item
${\bf \Z_2 \times \Z_2}$.
This group can be realized simply as a group of independent sign flips of the three doublets (up to the overall sign flip).
Every term in the potential must contain each doublet in pairs. In addition to (\ref{3HDMpotential-Tsymmetric}),
the $\Z_2 \times \Z_2$-symmetric potential can contain
\be
\lambda_{1212}(\phi_1^\dagger\phi_2)^2 +
\lambda_{2323}(\phi_2^\dagger\phi_3)^2 +
\lambda_{3131}(\phi_3^\dagger\phi_1)^2 + h.c.
\ee
\item
${\bf \Z_2}$.
This group can be realized, for example, as a group of sign flips of the third doublet.
The potential can contain any term where $\phi_3$ appears in pairs.

\end{itemize}

\subsection{The $\Z_3 \times \Z_3$-group}

The only finite abelian group that is not contained in any maximal torus in $PSU(3)$ is $\Z_3 \times \Z_3$.
Although there are many such groups inside $PSU(3)$, all of them are conjugate to each other. Thus, only one
representative case can be considered, and we describe it with the following two generators
\be
a = \left(\begin{array}{ccc} 1 & 0 & 0 \\ 0 & \omega & 0 \\ 0 & 0 & \omega^2 \end{array}\right),\quad
b = \left(\begin{array}{ccc} 0 & 1 & 0 \\ 0 & 0 & 1 \\ 1 & 0 & 0 \end{array}\right),\quad \omega = \exp\left({2\pi i \over 3}\right)\,.
\ee
A generic potential that stays invariant under this group of transformations is
\bea
V & = &  - m^2 \left[(\phi_1^\dagger \phi_1)+ (\phi_2^\dagger \phi_2)+(\phi_3^\dagger \phi_3)\right]
+ \lambda_0 \left[(\phi_1^\dagger \phi_1)+ (\phi_2^\dagger \phi_2)+(\phi_3^\dagger \phi_3)\right]^2 \nonumber\\
&&+ \lambda_1 \left[(\phi_1^\dagger \phi_1)^2+ (\phi_2^\dagger \phi_2)^2+(\phi_3^\dagger \phi_3)^2\right]
+ \lambda_2 \left[|\phi_1^\dagger \phi_2|^2 + |\phi_2^\dagger \phi_3|^2 + |\phi_3^\dagger \phi_1|^2\right] \nonumber\\
&&+ \lambda_3 \left[(\phi_1^\dagger \phi_2)(\phi_1^\dagger \phi_3) + (\phi_2^\dagger \phi_3)(\phi_2^\dagger \phi_1) + (\phi_3^\dagger \phi_1)(\phi_3^\dagger \phi_2)\right]
+ h.c.\label{VZ3Z3}
\eea
with real $m^2$, $\lambda_0$, $\lambda_1$, $\lambda_2$ and complex $\lambda_3$.
It is also interesting to note that this group describes an abelian frustrated symmetry of the 3HDM potential,
see details in \cite{frustrated}.

The potential (\ref{VZ3Z3}) is not symmetric under any continuous Higgs-family transformation,
which can be proved, for example, using the adjoint representation of the bilinears described in \cite{Ivanov1}
and briefly outlined in section~\ref{subsection-potential}.
However it is symmetric under the exchange of any two doublets,
e.g. $\phi_1 \leftrightarrow \phi_2$. Together with $b$, such transformations generate the group $S_3$ of permutations of the three doublets.
Its action on elements of the $\Z_3$ generated by $a$ is also naturally defined,
therefore, the potential (\ref{VZ3Z3}) is automatically symmetric under the group $(\Z_3 \times \Z_3)\rtimes \Z_2$, which is non-abelian.

We can conclude that the symmetry group $\Z_3 \times \Z_3$ is not realizable for 3HDM.

\subsection{Including antiunitary symmetries in 3HDM}

In Section \ref{subsection-antiunitary-general} we described the strategy of embedding an abelian group of unitary transformations
into a larger abelian group
which includes anti-unitary transformations. Applying this strategy to 3HDM for all groups from the list (\ref{list3HDM}),
we obtain the following additional realizable abelian groups:
\be
\Z_2^*\,,\quad \Z_2\times \Z_2^*\,, \quad \Z_2\times \Z_2\times \Z_2^*\,, \quad \Z_4^*\,,
\ee
where the asterisk indicates that the generator of the corresponding cyclic group is an anti-unitary transformation.
Details of the derivation are presented in Appendix~\ref{appendix-antiunitary}. Here we just briefly explain why groups such as $\Z_6^*$, $\Z_8^*$,
$U(1)\times \Z_2^*$ do not appear in this list.
When we search for anti-unitary transformations commuting wth the selected subgroup of $T$,
we can indeed construct such groups at the price of imposing certain restrictions on the coefficients of the $T$-symmetric part
(\ref{3HDMpotential-Tsymmetric}). This makes the potential symmetric under a larger group of {\em unitary} transformations, which
is nonabelian. According to our definition, this means that the groups such as $\Z_6^*$ are not realizable,
although they are subgroups of larger realizable non-abelian symmetry groups.

\subsection{Abelian symmetries of the 4HDM potential}

In the case of four Higgs doublets the representative maximal torus in $PSU(4)$ is $T = U(1)_1\times U(1)_2 \times U(1)_3$,
where
\be
U(1)_1  =  \alpha(-1,\, 1,\, 0,\, 0)\,,\quad
U(1)_2  =  \beta(-2,\, 1,\, 1,\, 0)\,,\quad
U(1)_3  =  \gamma\left(-{3 \over 4}, \, {1 \over 4},\, {1 \over 4},\, {1 \over 4}\right)\,.
\ee
The phase rotations of a generic bilinear combination of doublets under $T$ is characterized by three integers $p,\,q,\,r$,
\be
(\phi_a^\dagger \phi_b) \to \exp[i(p_{ab}\alpha + q_{ab}\beta + r_{ab}\gamma)](\phi_a^\dagger \phi_b)\,.
\ee
These coefficients can again be represented graphically as labels of edges of three simplices shown in Fig.~\ref{graph4HDM}.

\begin{figure} [ht]
\centering
\includegraphics[height=4cm]{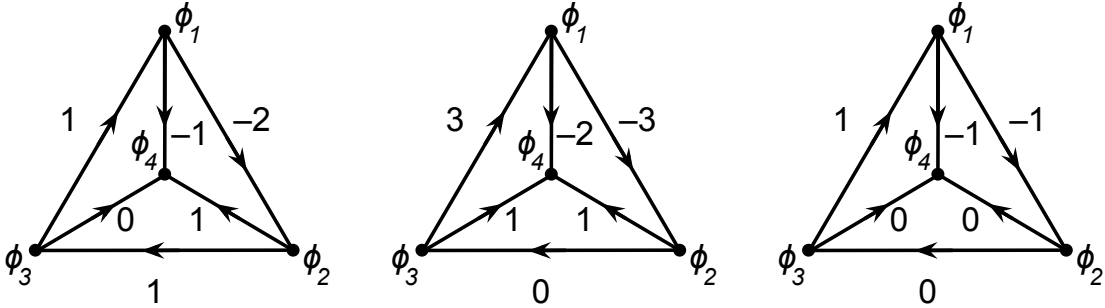}
\caption{Coefficients $p$, $q$, and $r$ as labels of triangles representing bilinears
in 4HDM}
\label{graph4HDM}
\end{figure}

With four doublets, one can construct (up to conjugation) 42 monomials transforming non-trivially under $T$.
When we search for realizable finite groups, we pick up all possible combinations of three distinct monomials $v_1$, $v_2$, $v_3$,
construct the $3 \times 3$ matrix $X(v_1+v_2+v_3)$, check that it is non-degenerate and then diagonalize it to obtain diag$(d_1,d_2,d_3)$.
The symmetry group is then $\Z_{d_1}\times \Z_{d_2}\times \Z_{d_3}$.
Although this brute force algorithm can be easily implemented in a machine code,
we can in fact apply the results of the next Section to find the list of all
realizable finite subgroups of the maximal torus in 4HDM:
\be
\Z_k\ \mbox{with}\ k = 2,\, \dots ,\, 8; \qquad
\Z_2\times \Z_k\ \mbox{with}\ k = 2,\,3,\,4; \qquad
\Z_2 \times \Z_2 \times \Z_2\,.\label{list4HDMfinite}
\ee
In short, these are all finite abelian groups of order $\le 8$.
Perhaps, the most surprising of these groups is $\Z_7$, because it does not appear as a realizable subgroup of
any single $U(1)_i$ factor (no two labels on Fig.~\ref{graph4HDM} sum up to 7).
An example of terms that possess this symmetry group is
\be
\lambda (\phi_1^\dagger \phi_3) (\phi_1^\dagger \phi_4) +
\lambda' (\phi_2^\dagger \phi_1) (\phi_2^\dagger \phi_4) +
\lambda'' (\phi_3^\dagger \phi_2) (\phi_3^\dagger \phi_4)\,,
\ee
and the $\Z_7$ group is generated by the transformation with phases
\be
a = {2\pi \over 28}\left(-9,\,-1,\,3,\,7\right)\,.
\ee
One can immediately check that $a^7$ lies in the center $Z(SU(4))$ and is, therefore, equivalent to the trivial
transformation.

Realizable continuous abelian symmetry groups as well as abelian groups
containing antiunitary tansformations can also be found following the general strategy.

\section{Abelian symmetries in general NHDM}\label{section-NHDM}

The algorithm described above can be used to find all abelian groups realizable as the symmetry groups of the Higgs potential
for any $N$. We do not yet have the full list of finite abelian groups for a generic $N$ presented in a compact form,
although we put forth a conjecture concerning this issue, see Conjecture~\ref{conjecture-list} below.
However, several strong results can be proved about the order and possible structure of finite realizable subgroups of the maximal torus.

Throughout this Section we will often use $n:=N-1$. Also, whenever we mention in this Section a finite abelian group
we actually imply a finite realizable subgroup of the maximal torus.

\subsection{Upper bound on the order of finite abelian groups}

It can be expected from the general construction that for any given $N$ there exists an upper bound on the order of
finite realizable subgroups of the maximal torus in NHDM. Indeed, in this Section we will prove the following theorem:
\begin{theorem}\label{theorem-order}
The exact upper bound on the order of the realizable finite subgroup of maximal torus in NHDM is
\be
|G| \le 2^{N-1}\,.
\ee
\end{theorem}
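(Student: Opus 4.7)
The plan is to convert the statement into a determinant estimate for an $(N-1)\times(N-1)$ integer matrix, and then exploit the fact that every charge vector is a sum of at most two roots of the $A_{N-1}$ root system. By the Pontryagin-duality argument from the previous subsection, $|G|$ equals $|\det X(V)|$, where the rows of $X(V)$ are the charge vectors $\vec{m}_1,\dots,\vec{m}_{N-1}$ written in any $\Z$-basis of the character lattice $X(T)$; I would identify $X(T)$ with the root lattice $Q$ of $A_{N-1}$. It then suffices to prove $|\det X(V)|\leq 2^{N-1}$.

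The crucial structural observation is the classification (\ref{fourtypes}): every monomial's charge is either a single root of $A_{N-1}$ (type 1, from a bilinear $\phi_a^\dagger\phi_b$) or a sum of two roots (types 2--4, from a quartic product of two bilinears). For each row of type $\geq 2$ I would write $\vec{m}_i=v_i+w_i$ with $v_i,w_i$ roots; for a type-1 row $\vec{m}_i$ is already a root and no splitting is done. Multilinearity of the determinant in the split rows then expands
\[
\det X(V) = \sum_{S} \det M_S,
\]
where $S$ runs over the ``left/right'' choices in the split rows, yielding at most $2^{N-1}$ terms. By construction every row of every $M_S$ is a root of $A_{N-1}$.

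The main obstacle is a short root-system lemma: \emph{any $N-1$ linearly independent roots of $A_{N-1}$ are a $\Z$-basis of $Q$.} I would prove it by invoking the classification of closed root subsystems of $A_{N-1}$---these are of type $A_{s_1-1}\times\cdots\times A_{s_l-1}$, indexed by set partitions of $\{1,\dots,N\}$, of total rank $N-l$---which forces any rank-$(N-1)$ closed subsystem to come from the trivial partition and therefore coincide with $A_{N-1}$ itself; the sublattice spanned by the chosen roots is the root lattice of this subsystem, hence equal to $Q$. Granted the lemma, $|\det M_S|\in\{0,1\}$ for every $S$, and the triangle inequality delivers $|\det X(V)|\leq 2^{N-1}$.

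For sharpness I would exhibit an explicit $\Z_2^{N-1}$-symmetric potential: add to the $T$-symmetric skeleton the terms $(\phi_a^\dagger\phi_{a+1})^2+\mathrm{h.c.}$ for $a=1,\dots,N-1$. Every added charge is $2\alpha$ for a simple root $\alpha$, so $X(V)$ becomes $2\cdot I_{N-1}$ in the simple-root basis, all $2^{N-1}$ terms in the multilinearity expansion coincide and add constructively, and $|\det X(V)|=2^{N-1}$, matching the upper bound.
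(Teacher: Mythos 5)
Your argument is correct and establishes the same bound by a genuinely different route from the paper's. The paper writes $X(V)=cA$ with $\det A=1$, catalogues the nine possible row types of $c$ in (\ref{nine-types}), and proves $|\det c|\le 2^{n}$ by induction on $n=N-1$, with a separate column manipulation and minor expansion for each of the nine types, checking in every case that the resulting minors again have rows of the allowed types. You replace this case analysis by two structural facts: every charge vector in (\ref{fourtypes}) is a root of $A_{N-1}$ or a sum of two roots, so multilinearity of the determinant in the rows expands $\det X(V)$ into at most $2^{N-1}$ determinants of matrices whose rows are all roots; and any $N-1$ linearly independent roots of $A_{N-1}$ form a $\Z$-basis of the root lattice, so each of those determinants is $0$ or $\pm 1$. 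This is shorter and makes the origin of the bound transparent (one factor of $2$ per quartic monomial). One caveat: the basis lemma carries all the weight, and your proposed justification via the classification of closed subsystems does not by itself yield that the $\Z$-span of the chosen roots equals the root lattice of the subsystem they generate --- that is precisely the unimodularity claim restated, and it is special to type $A$ (it fails, for instance, for two long roots of $B_2$, which span an index-$2$ sublattice). The clean way to close it is the graph-theoretic one you allude to implicitly: $N-1$ linearly independent roots $e_i-e_j$ correspond to the edges of a spanning tree on $N$ vertices, and the incidence matrix of a graph is totally unimodular. Your sharpness example, adding $(\phi_a^\dagger\phi_{a+1})^2+\mathrm{h.c.}$ for $a=1,\dots,N-1$ so that $X(V)=2\cdot 1_{N-1}$ in the simple-root basis, is essentially the paper's matrix $c_{2^n}$ in (\ref{c2n}) with the off-diagonal entries removed, and agrees with Propositions~\ref{prop-cyclic-NHDM} and~\ref{prop-many-cyclic-NHDM}; the paper's own proof of Theorem~\ref{theorem-order} leaves sharpness to those propositions.
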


Before presenting the proof, let us first develop some convenient tools.
First, with the choice of the maximal torus (\ref{maximal-torus-PSUN}), we construct $n=N-1$ bilinears
$(\phi_1^\dagger \phi_{i+1})$, $i=1,\dots,n$.
The vectors of coefficients $a_i = (p_i,\, q_i,\, \dots,\, t_i)$ defined in (\ref{pq-generic-NHDM}) can be easily written:
\bea
a_1 &=& (2,3,4,\dots,n,1)\,,\nonumber\\
a_2 &=& (1,3,4,\dots,n,1)\,,\nonumber\\
a_3 &=& (1,2,4,\dots,n,1)\,,\nonumber\\
\vdots && \vdots \nonumber\\
a_n &=& (1,2,3,\dots,n-1,1)\,.
\eea
One can use these vectors to construct the $n\times n$ matrix $A$:
\be
A = \left(\begin{array}{c}
a_1\\
\vdots\\
a_n
\end{array}
\right)\,, \quad \det A = 1\,.\label{detaa}
\ee
From the unit determinant we can also conclude that after diagonalization the matrix $A$ becomes the identity matrix.

Now consider a bilinear $(\phi_i^\dagger \phi_j)$ with $i,j \not = 1$; its vector of coefficients can be represented as $a_{j-1} - a_{i-1}$.
More generally, for any monomial $(\phi_i^\dagger \phi_j)(\phi_k^\dagger \phi_m)$ with any $i,j,k,m$, the vector of coefficients
has the form $a_{j-1} - a_{i-1} + a_{m-1} - a_{k-1}$, where $a_0$ is understood as zero.
This means that the vector of coefficients of any monomial can be represented as a linear combination of $a$'s
with coefficients $0$, $\pm 1$ and $\pm 2$.
Since the $X$-matrix (\ref{XV}) is constructed from $n$ such vectors,  we can represent it as
\be
X_{ik} = c_{ij}A_{jk}\,.\label{XcA}
\ee
The square $n \times n$ matrix $c_{ij}$ can contain rows only of the following nine types (up to permutation and overall sign change):
\bea
\mbox{type }1: && (1,\,0,\,\cdots,\,0)\,,\nonumber\\
\mbox{type }2: && (2,\,0,\,\cdots,\,0)\,,\nonumber\\
\mbox{type }3: && (1,\,1,\,0,\,\cdots,\,0)\,,\nonumber\\
\mbox{type }4: && (1,\,-1,\,0,\,\cdots,\,0)\,,\nonumber\\
\mbox{type }5: && (2,\,-1,\,0,\,\cdots,\,0)\,,\label{nine-types}\\
\mbox{type }6: && (1,\, 1,\,-1,\,0,\,\cdots,\,0)\,,\nonumber\\
\mbox{type }7: && (2,\, -2,\,0,\,\cdots,\,0)\,,\nonumber\\
\mbox{type }8: && (2,\, -1,\,-1,\,0,\,\cdots,\,0)\,,\nonumber\\
\mbox{type }9: && (1,\, 1,\,-1,\,-1,\, 0,\,\cdots,\,0)\,.\nonumber
\eea
It follows from (\ref{XcA}) and (\ref{detaa}) that $\det X = \det c\cdot \det A = \det c$.
Therefore, order of any finite group is given by the module of determinant of $c$: $|G| = |\det c|$.

Let us also note two properties of the strings of type 1--9. Take any such string of length $n$, $x_{(n)} = (x_1,\,x_2,\,\dots,\,x_n)$,
which is obtained from (\ref{nine-types}) by an arbitrary permutation and possibly an overall sign flip.
Then any its substring $x_{(n-1)} = (x_1,\,\dots,\,x_{k-1},\, x_{k+1},\,\dots,\,x_n)$ obtained by removing
an arbitrary element $x_k$ is also of type 1--9.
Moreover, the element removed can be added at any place, and still the string
$x'_{(n-1)} = (x_1,\,\dots,\,x_m+x_k,\,\dots,\,x_{k-1},\, x_{k+1},\,\dots,\,x_n)$
remains of type 1--9.
Both properties can be proved by direct inspection of all the strings.

Now we are ready to prove Theorem~\ref{theorem-order}.

\begin{proof}
We prove by induction. Suppose that for any $(n-1)\times (n-1)$ square matrix $D_{n-1}$
whose rows are strings of type 1--9, its determinant $d_{n-1} = \det D_{n-1}$ is limited by
$|d_{n-1}| \le 2^{n-1}$. Take now any $n\times n$ matrix $D_{n}$ constructed from the same
family of strings and compute its determinant $d_{n}$ by minor expansion over the first row,
with $d^{(1)}_{n-1}$, $d^{(2)}_{n-1}$, $\dots$, being the relevant minors.
The procedure then depends on what type the first row is.
\begin{enumerate}
\item
If the first row is of type 1 or 2, then $|d_{n}| \le 2 |d^{(1)}_{n-1}| \le 2^n$.
\item
If the first row is of type 3 or 4, then $|d_{n}| = | d^{(1)}_{n-1} \pm d^{(2)}_{n-1}| \le
| d^{(1)}_{n-1}| + |d^{(2)}_{n-1}|  \le 2^n$.
\item
If the first row is of type 5, then we permute the columns so that it becomes exactly as in (\ref{nine-types})
and then add the second column to the first one. Then, the first row becomes $(1,\,-1,\,0,\,\cdots,\,0)$.
The first minor does not change and contains rows such as $(x_2,\,x_3,\,\dots,\,x_n)$,
while the second minor contains rows $(x_1+x_2,\,x_3,\,\dots,\,x_n)$.
Due to the properties discussed above, these strings are also of type 1--9,
therefore the induction assumption applies to both minors.
We therefore conclude that
$|d_{n}| = | d^{(1)}_{n-1} + d^{(2)}_{n-1}| \le 2^n$.
\item
If the first row is of type 6, then repeat follow the same procedure with only change that we add the second column
to the third. The first row becomes $(1,\,1,\,0,\,\dots,\,0)$, while
the other rows have generic form $(x_1,\,x_2,\,x_2+x_3,\, x_4,\, \dots)$.
The first minor contains rows of the form $(x_2,\,x_2+x_3,\, x_4,\, \dots)$, which is
equivalent to $(x_2,\,x_3,\, x_4,\, \dots)$, while the second minor contains
$(x_1,\,x_2+x_3,\, x_4,\, \dots)$. Both rows are of type 1--9, therefore
$|d_{n}| = | d^{(1)}_{n-1} - d^{(2)}_{n-1}| \le 2^n$.
\item
If the first row is of type 7, we follow the procedure described for type 5 and get
$|d_{n}| = 2| d^{(1)}_{n-1}| \le 2^n$.
\item
If the first row is of type 8, we add the second and the third columns to the first, so that
the first row becomes $(0,\,-1,\,-1,\,\dots,\,0)$, while the other rows are of the form
$(x_1+x_2+x_3,\,x_2,\,x_3,\, x_4,\,\dots)$.
The second minor then contains rows $(x_1+x_2+x_3,\,x_3,\,x_4,\, \dots)$ which are equivalent
to $(x_1+x_2,\,x_3,\,x_4,\, \dots)$, being of allowed type.
The third minor contains $(x_1+x_2+x_3,\,x_2,\,x_4,\, \dots)$ equivalent
to $(x_1+x_3,\,x_2,\,x_4,\, \dots)$, again of allowed type.
Therefore the induction assumption applies to both minors, and we conclude that
$|d_{n}| = | d^{(2)}_{n-1} - d^{(3)}_{n-1}| \le 2^n$.
\item
If the first row is of type 9, we add the third column to the first and the fourth column to the second.
The first row turns into $(0,\,0,\,-1,\,-1,\,\dots,\,0)$, while the other rows become of form
 $(x_1+x_3,\,x_2+x_4,\,x_3,\,x_4,\, \dots)$.
The third minor is built of rows $(x_1+x_3,\,x_2+x_4,\,x_4,\, \dots)$, which are equivalent to
$(x_1+x_3,\,x_2,\,x_4,\, \dots)$, while the fourth minor is built of rows $(x_1+x_3,\,x_2+x_4,\,x_3,\, \dots)$ equivalent to
$(x_1,\,x_2+x_4,\,x_3,\, \dots)$. Both are of the allowed type, so the induction assumption
applies to both minors, and
$|d_{n}| = | d^{(4)}_{n-1} - d^{(3)}_{n-1}| \le 2^n$.
\end{enumerate}
Therefore, $|d_{n}| \le 2^n$ follows for any type of the first row, which completes the proof.
\end{proof}

\subsection{Cyclic groups and their products}

Here we prove two propositions which show that a rather broad class of finite abelian groups
are realizable as symmetry groups of the Higgs
potential in the NHDM. First we show which cyclic groups are realizable and then consider direct products
of cyclic groups.

We showed in the previous subsection that in order to find the order of the finite group,
one can use representation (\ref{XcA}) and focus on the matrix $c_{ij}$ instead of the matrix $X$.
Let us now prove that the matrix $c$ can also be used to find the finite symmetry group of the given potential.
First, recall that the matrix $X$ defines a system of linear equations on angles $\alpha_i$:
\be
X_{ij}\alpha_j = c_{ik}A_{kj} \alpha_j = 2\pi r_i\,.
\ee
This system remains the same upon
\begin{itemize}
\item
simultaneous sign change of the $k$-th column in $c$ and $k$-th row in $A$;
\item
simultaneous exchange of two columns in $c$ and two rows in $A$;
\item
simultaneous summation of two columns in $c$ and subtraction of two rows in $A$.
\item
sign changes of the $k$-th column in $A$ and $\alpha_k$;
\item
exchange of columns $i$ and $j$ in $A$ and exchange $\alpha_i \leftrightarrow \alpha_j$;
\item
summation of two columns in $A$ and subtraction of two $\alpha$'s;
\item
sign changes of the $k$-th row in $c$ and of the integer parameters $r_k$;
\item
exchange of two rows in $c$ and of two $r$'s;
\item
summation of two rows in $c$ and of two parameters $r$.
\end{itemize}
In short, it means that all allowed manipulations with integer matrices described earlier
can be used for $A$ and $c$.

Then we proceeding in the following way.
We first diagonalize $A$; the matrix $c$ becomes modified by a series of allowed transformations.
But diagonal $A$ is equal to the identity matrix. Therefore, we are left only with $c$ in the system of equations.
We then diagonalize $c$ and construct the symmetry group from its diagonal values.
This means that in order to prove that a given group is realizable, we simply need to give an example of matrix $c_{ij}$
constructed from rows of type 1--9, (\ref{nine-types}),
which yields the desired diagonal values after diagonalization.

Now we move to the two propositions.

\begin{proposition}\label{prop-cyclic-NHDM}
The cyclic group $\Z_p$ is realizable for any positive integer $p \le 2^n$.
\end{proposition}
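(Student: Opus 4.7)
The plan is to exhibit, for each $p \in [1, 2^n]$, an $n \times n$ integer matrix $c$ whose rows are drawn from the nine types listed in (\ref{nine-types}) and whose Smith normal form is $\mathrm{diag}(1, \ldots, 1, p)$. Since $X = cA$ with $\det A = 1$ by (\ref{detaa}), the coefficient matrix $X$ will share this Smith form, so by the analysis leading up to the proposition the associated Higgs potential will have realizable finite symmetry group $\Z_p$.

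I would proceed by induction on $n$. The base case $n = 1$ is handled by $c = (1)$ or $c = (2)$. For $p \le 2^{n-1}$ the inductive hypothesis furnishes an $(n-1)\times(n-1)$ matrix $c'$ realizing $\Z_p$, and padding it as $c = \mathrm{diag}(1, c')$ preserves the Smith form: the new first row $(1, 0, \ldots, 0)$ is of type 1 and each remaining row gains a leading zero without changing its type. The substantive case is $2^{n-1} < p \le 2^n$, where I would employ a ``chain plus capstone'' construction: rows 1 through $n-1$ are the type-5 rows placed on the subdiagonal with entries $2$ and $-1$, enforcing $e_j = 2^{j-1} e_1$ in the cokernel, and the last row $(c_{n,1}, \ldots, c_{n,n})$ contributes the single relation $\bigl(\sum_j c_{n,j}\, 2^{j-1}\bigr) e_1 = 0$, producing a cyclic cokernel of order $|\sum_j c_{n,j}\, 2^{j-1}|$. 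Writing $p = 2^n - r$ with $0 \le r < 2^{n-1}$, the subcases $r = 0$, $r = 2^a$, and $r = 2^a + 2^b$ are handled directly by choosing the last row to be of types 2, 5, and 8 respectively, and the analogous low-weight cases for $p$ itself use types 1, 2, or 3.

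The main obstacle will be the residual case in which neither $p$ nor $r = 2^n - p$ admits a representation as a signed sum of at most four powers of two from the allowed patterns (an example is $p = 25$ with $n = 5$): no single row of types 1--9 can then play the role of capstone against the pure doubling chain. I would resolve this by modifying one of the chain rows, replacing the type-5 row enforcing $e_{j+1} = 2 e_j$ by a type-8 row encoding $2 e_j = e_{j+1} + e_i$; this changes the integer $k_{j+1}$ defined by $e_{j+1} = k_{j+1} e_1$ from $2 k_j$ to $2 k_j - k_i$ and propagates through the remaining chain, leaving enough freedom that a suitable choice of the index $i$ together with a capstone row from types 1--9 realizes the desired $p$. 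I expect to close this case by a secondary induction on the binary weight of $p$, each chain modification reducing the arithmetic content of the residual capstone to a strictly simpler case already handled by the primary induction, thereby covering all $p \le 2^n$.
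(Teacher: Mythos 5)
Your setup coincides with the paper's: reduce to an $n\times n$ matrix $c$ whose rows are of the types (\ref{nine-types}), use $\det A=1$ to transfer the Smith form from $c$ to $X$, and build $c$ as a doubling chain so that the cokernel is generated by a single element and is therefore automatically cyclic. The padding argument for $p\le 2^{n-1}$ and the capstone choices when $p$ or $r=2^n-p$ has low binary weight are fine. The problem is that the case you yourself flag as the main obstacle --- $p$ such that neither $p$ nor $r$ is expressible through a single capstone row of types 1--9 against the pure doubling chain (your $p=25$, $n=5$) --- is exactly where the content of the proposition lies, and there you only assert that ``a suitable choice of the index $i$ together with a capstone row'' exists and that a ``secondary induction on the binary weight'' will close the argument. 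Neither claim is established: it is not clear that one modified chain row plus one capstone always suffices, as the weight of $r$ grows you would need to modify more and more chain rows, and the bookkeeping of which integers remain reachable is precisely the combinatorial difficulty you have deferred. As written, the proof is incomplete at its central step.

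The paper resolves this with a single uniform construction that you are circling around: write $q=2^n-p$ in binary as a column vector of $n$ digits $b_1,\dots,b_n$ and subtract it from the first column of the pure chain matrix (\ref{c2n}). Each set bit converts one type-5 chain row into a type-8 row (the first row into type 4, the last into type 5), so every row stays within the allowed list, and the relations become $x_{j+1}=2x_j-b_jx_1$ with the final relation $\bigl(2^n-\sum_j b_j2^{\,n-j}\bigr)x_1=p\,x_1=0$; the cokernel is still generated by $x_1$, hence is $\Z_p$, for every positive $p\le 2^n$ with no case analysis and no secondary induction. In other words, the fix for your residual case is not to concentrate the correction in one capstone and one modified chain row, but to distribute it over the whole chain according to the binary expansion of the deficit $q$. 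If you push your ``modify one row per step'' idea to completion you will arrive at exactly this matrix, so the gap is repairable, but the key idea is missing from the proposal.
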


\begin{proof}We start with the following $n \times n$ matrix $c_{ij}$:
\be
c_{2^n} =
\left(\begin{array}{ccccccc}
2 & -1 & 0 & 0 &\cdots & 0 & 0 \\
0 & 2 & -1 & 0 &\cdots & 0 & 0 \\
0 & 0 & 2 & -1 &\cdots & 0 & 0 \\
\vdots & \vdots & \vdots & \vdots & & \vdots & \vdots \\
0 & 0 & 0 & 0 &\cdots & 2 & -1 \\
0 & 0 & 0 & 0 & \cdots & 0 & 2
\end{array}\right)\,.\label{c2n}
\ee
By straightforward manipulation with columns, starting from the last one,
we arrive at
\be
c_{2^n} =
\left(\begin{array}{ccccccc}
0 & -1 & 0 & 0 &\cdots & 0 & 0 \\
0 & 0 & -1 & 0 &\cdots & 0 & 0 \\
0 & 0 & 0 & -1 &\cdots & 0 & 0 \\
\vdots & \vdots & \vdots & \vdots & & \vdots & \vdots \\
0 & 0 & 0 & 0 &\cdots & 0 & -1 \\
2^n & 2^{n-1} & 2^{n-2} & 2^{n-3} & \cdots & 2^2 & 2
\end{array}\right) \to
\left(\begin{array}{ccccccc}
2^n & 0 & 0 &\cdots & 0 & 0 \\
0 &  1 & 0 &\cdots & 0 & 0 \\
0 & 0 & 1 &\cdots & 0 & 0 \\
\vdots & \vdots & \vdots & & \vdots & \vdots \\
0 & 0 & 0 & \cdots & 0 & 1
\end{array}\right)\,,
\ee
which produces the finite group $\Z_{2^n}$.
Now, let us modify $c_{2^n}$ by replacing the zero at the left lower corner by $-1$.
Then the same transformation leads us to the group $\Z_{2^n-1}$.
If, instead, we replace any zero in the first column by $-1$, $c_{k1} = -1$,
then the diagonalization procedure leads us to the group $\Z_p$ with $p = 2^n - 2^{n-k}$.

Now, generally, take any integer $0 \le q < 2^{n}$ and write it in the binary form.
This form uses at most $n$ digits. Write this binary form as a vector with $n$ components
and subtract it from the first column of the matrix $c_{2^n}$. Then, after diagonalization,
we obtain the symmetry group $\Z_p$ with $p = 2^n - q$. This completes the proof.
\end{proof}

Just to illustrate this construction, for $n=5$ and $q = 23$ we write
$$
q = 23_{10} = 10111_2 = \left(\begin{array}{c}
1 \\
0 \\
1 \\
1 \\
1
\end{array}
\right); \quad c_{32} - q \equiv
\left(\begin{array}{ccccccc}
1 & -1 & 0 & 0 & 0 \\
0 & 2 & -1 & 0 & 0 \\
-1 & 0 & 2 & -1 & 0 \\
-1 & 0 & 0 & 2 & -1 \\
-1 & 0 & 0 & 0 & 2
\end{array}\right)\,.
$$
Repeating the above procedure we obtain at the lowest left corner $2^5 - 2^4 - 2^2 - 2^1 - 2^0 = 32 - 23 = 9$,
which gives the $\Z_9$ group.

Now we show that not only cyclic groups but many of their products can be realized as symmetry groups of some potential.

\begin{proposition}\label{prop-many-cyclic-NHDM}
Let $n = \sum_{i=1}^k n_i$ be a partitioning of $n$ into a sum of non-negative integers $n_i$.
Then, the finite group
\be
G = \Z_{p_1}\times \Z_{p_2}\times \cdots \times \Z_{p_k}\label{manyZn}
\ee
is realizable for any $0 < p_i < 2^{n_i}$.
\end{proposition}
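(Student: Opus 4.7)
My plan is to reduce the product case to $k$ independent instances of Proposition~\ref{prop-cyclic-NHDM} by placing the cyclic constructions along the diagonal of an $n\times n$ block matrix. More precisely, given the partition $n=n_1+\cdots+n_k$, I will take for each $i$ the $n_i\times n_i$ matrix $c^{(i)}$ constructed in the proof of Proposition~\ref{prop-cyclic-NHDM} that produces the group $\Z_{p_i}$, and form the block-diagonal matrix
\[
c \;=\; \mathrm{diag}\!\left(c^{(1)},\,c^{(2)},\,\ldots,\,c^{(k)}\right)
\]
of size $n\times n$. Using the correspondence $X=cA$ established in (\ref{XcA}), this $c$ together with the canonical $A$ specifies a choice of $n$ monomials to add to the $T$-symmetric potential.

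The first step is to verify that every row of $c$ is of one of the nine admissible types in (\ref{nine-types}). Each row of $c$ arises from a row of some $c^{(i)}$ (which is of type 1--9 by the proof of Proposition~\ref{prop-cyclic-NHDM}) extended by zeros in all coordinates outside the $i$th block. Since the classification in (\ref{nine-types}) is up to permutation and overall sign, padding a row of type 1--9 with extra zeros preserves its type. Hence $c$ is a legitimate matrix of monomial phase-coefficients.

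The second step is to read off the symmetry group. Because the allowed row/column operations described in Section~\ref{section-strategy} (and the list of nine joint $(c,A)$-manipulations used in the proof of Proposition~\ref{prop-cyclic-NHDM}) act blockwise on $c$, we can diagonalize each block $c^{(i)}$ independently, exactly as in Proposition~\ref{prop-cyclic-NHDM}, and the resulting diagonal of $c$ has entries $(1,\ldots,1,p_1,1,\ldots,1,p_2,\ldots,p_k)$ up to permutation. By the discussion preceding the proposition, the corresponding group of phase rotations in the torus $T$ is $\Z_{p_1}\times\cdots\times\Z_{p_k}$.

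The third step is to invoke Theorem~\ref{prop-T-symmetric} to upgrade this torus-symmetry to a realizable symmetry. The $T$-symmetric part (\ref{Tsymmetric}) can be chosen with generic coefficients so that the only unitary symmetries of the full potential $V=V(T)+V_1$, where $V_1$ contains the $n$ additional monomials encoded by $c$, are those of the torus. Combined with the previous step, this shows that the full unitary symmetry group of $V$ equals $\Z_{p_1}\times\cdots\times\Z_{p_k}$, proving realizability. The only subtle point is really the first step; once one notices that zero-padding a type 1--9 string yields another type 1--9 string, the block-diagonal assembly goes through with no interaction between blocks, so I do not anticipate any substantive obstacle.
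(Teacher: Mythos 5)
Your proposal is correct and follows essentially the same route as the paper: both build the block-diagonal matrix $\mathrm{diag}(c^{(1)},\ldots,c^{(k)})$ (the paper obtains it by zeroing some $-1$'s in $c_{2^n}$, you assemble it directly from the blocks of Proposition~\ref{prop-cyclic-NHDM}), diagonalize blockwise, and rely on Theorem~\ref{prop-T-symmetric} for realizability. Your explicit check that zero-padding preserves the type 1--9 property is a detail the paper leaves implicit, but it is not a different argument.
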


\begin{proof}
Let us start again from the matrix (\ref{c2n}). For any partitioning of $n$ one can turn this matrix into a block-diagonal
matrix by replacing some of the $-1$'s by zeros. The matrix is then represented by the smaller square blocks of size
$n_1,\, n_2,\, \dots ,\, n_k$. The $i$-th blocks has exactly the form of (\ref{c2n}), with $n$ replaced by $n_i$.
Therefore, within this block one can encode any cyclic group $\Z_{p_i}$, with $0 < p_i \le 2^{n_i}$.
Since each block can be treated independently, we can encode any group of the form (\ref{manyZn}).
\end{proof}

Let us note that this proposition covers many but not all possible finite abelian groups with order $\le 2^n$.
For example, for $n=5$, we can think of the group $\Z_5 \times \Z_5$ whose order is smaller than $2^5 = 32$.
However, there exists no partitioning of $5$ that would lead to this group by applying the proposition just proved.
At this moment, it remains an open problem if such groups are realizable.
Using Theorem~\ref{theorem-order}, we can formulate the following conjecture:
\begin{conjecture}\label{conjecture-list}
Any finite abelian group with order $\le 2^{N-1}$ is realizable in NHDM.
\end{conjecture}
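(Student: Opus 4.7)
The plan is to combine the two Propositions already established with a reduction to prime-primary components and a delicate construction for odd primes. First, by the structure theorem, write $G\cong \bigoplus_p G_p$ as a direct sum of its Sylow $p$-subgroups. If each $G_p$ is realizable by an $m_p\times m_p$ matrix of rows of types 1--9 with $\sum_p m_p\le n$, then placing these matrices in block-diagonal form (padding with type-1 rows, which contribute trivial $\Z_1$ factors) realizes $G$ on the full $n\times n$ matrix $c$. So the problem reduces to controlling $m_p$ for each Sylow subgroup while respecting the total budget $n\ge \log_2|G|$.

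For the $p=2$ Sylow part, Proposition~\ref{prop-many-cyclic-NHDM} already achieves $m_2 = \log_2|G_2|$ rows exactly, using precisely the ``information content'' of $G_2$. For odd $p$, Proposition~\ref{prop-cyclic-NHDM} handles a single cyclic factor $\Z_{p^a}$ on $m = \lceil a\log_2 p\rceil$ rows via the chain matrix (\ref{c2n}) modified by subtracting the binary encoding of $2^m-p^a$ from its first column. The essential new case, and the crux of the conjecture, is a non-cyclic odd $p$-group $H=\Z_{p^{a_1}}\times\cdots\times\Z_{p^{a_k}}$ with $k\ge 2$: the naive block-diagonal approach requires $\sum_i\lceil a_i\log_2 p\rceil$ rows, which may exceed $\lceil(a_1+\cdots+a_k)\log_2 p\rceil$ by up to $k-1$. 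Small instances such as $\Z_3\times\Z_3\times\Z_3$ at $n=5$ or $\Z_5\times\Z_5$ at $n=5$ sit exactly in this gap.

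My proposed strategy for this core step is inductive: starting from the cyclic realization of $\Z_{p^{a_1+\cdots+a_k}}$ given by Proposition~\ref{prop-cyclic-NHDM}, I would refine the invariant factor diagonal one entry at a time, splitting off a factor $\Z_{p^{a_i}}$ at each step by modifying one or two rows (or replacing a row producing a trivial $\Z_1$ factor by a new one) so that the cokernel $\Z^n/\Lambda$ splits as prescribed while every row stays within the list (\ref{nine-types}). I would exploit the extra flexibility of types 8 and 9, whose three- and four-component structures couple several coordinates simultaneously, to perform the refinements that rigid type-5 chains cannot. Induction on $k$, combined with Proposition~\ref{prop-T-symmetric} to certify that no unwanted larger unitary symmetry is generated, would then complete the argument.

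The main obstacle I expect is precisely showing that this inductive splitting is always possible. The entries of admissible rows are bounded by $\pm 2$, so each row carries only $O(1)$ bits of integer data; it is not \emph{a priori} clear that every Smith form with an odd prime power can be reached. A complementary line of attack would be to characterize directly the set of finite abelian groups of the form $\Z^n/\Lambda$ for $\Lambda$ generated by $n$ vectors of types 1--9, and to show that this set coincides with all abelian groups of order at most $2^n$. Should the inductive construction fail on a concrete small case such as $\Z_5\times\Z_5$ at $n=5$, that would furnish a counterexample and refute the conjecture; either outcome would clarify the precise form the final classification must take.
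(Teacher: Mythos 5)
The statement you are addressing is Conjecture~\ref{conjecture-list}, which the paper itself does \emph{not} prove: it is stated explicitly as an open problem, with $\Z_5\times\Z_5$ at $n=5$ singled out as a case not covered by Propositions~\ref{prop-cyclic-NHDM} and~\ref{prop-many-cyclic-NHDM}. Your proposal correctly reduces the problem to its genuine crux --- realizing a non-cyclic odd-$p$-group $\Z_{p^{a_1}}\times\cdots\times\Z_{p^{a_k}}$ on only $\lceil\log_2|H|\rceil$ rows rather than the $\sum_i\lceil a_i\log_2 p\rceil$ rows that block-diagonal assembly costs --- and your bookkeeping of that deficit (up to $k-1$ rows) is accurate. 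The Sylow decomposition, the padding with trivial blocks, and the appeal to Theorem~\ref{prop-T-symmetric} to certify that no larger unitary symmetry appears are all sound and consistent with the paper's framework.

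However, the proposal is a program, not a proof. The decisive step --- showing that the ``inductive splitting'' of the Smith normal form can always be carried out while keeping every row within the nine admissible types of (\ref{nine-types}) --- is exactly where you stop, and you acknowledge as much. Nothing in the proposal establishes, even for the single concrete instance $\Z_5\times\Z_5$ at $n=5$, that a $5\times 5$ matrix of type-1--9 rows with Smith form $\mathrm{diag}(5,5,1,1,1)$ exists; the heuristic that types 8 and 9 ``couple several coordinates'' is not a construction. Since the entries of admissible rows are bounded by $2$ in absolute value, it is genuinely unclear (and unproven here) that every abelian group of order $\le 2^n$ arises as $\Z^n/\Lambda$ for such a lattice $\Lambda$; your own closing remark that a failure on $\Z_5\times\Z_5$ would refute the conjecture concedes that the argument is not complete. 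In short: you have correctly located the open gap that the paper itself leaves, but you have not closed it, so this cannot be accepted as a proof of the conjecture.
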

If proven true, this conjecture will give the complete classification of realizable subgroups of the maximal torus in NHDM.

\section{Conclusions}\label{section-discussion}

Models with several Higgs doublets have become a popular framework
to introduce physics beyond the Standard Model.
When building the scalar sector of an $N$-doublet model,
one usually imposes various discrete symmetries, and many specific symmetry groups
have been used.
However, the complete lists of groups that can be implemented for a given $N$ are unknown for $N>2$.

In this paper we made a step towards classification of possible symmetries of the scalar sector of the NHDM.
Namely, we studied which abelian groups can be realized as symmetry groups of the NHDM potential.
We proved that they can be either subgroups of the maximal torus,
or the image under the canonical homomorphism $SU(N) \to PSU(N)$ of a finite nilpotent group of class 2.
For the subgroups of the maximal torus,
we developed an algorithmic strategy that gives full list of possible realizable abelian symmetries for any given $N$.
We illustrated how the strategy works with two small-$N$ examples. For 3HDM we gave the full list of realizable abelian symmetries,
including the ones with generalized $CP$-transformations, while for the 4HDM we just listed all finite realizable subgroups of the maximal torus.
We also proved that the order of any realizable finite group in NHDM is $\le 2^{N-1}$ and explicitly described
which finite groups can appear at given $N$.
Finally, we conjectured that {\em any} finite abelian group with order $\le 2^{N-1}$ is realizable in the NHDM.

It is worth stressing that the Higgs potential taken in the usual quadratic plus quartic form plays a major factor in restricting
possible symmetry groups from {\em all} subgroups of $SU(N)$.
If the potential could contain higher-order terms, the list of realizable group would be larger.
It means, in particular, that the recent impressive works on the beyond-SM applications of the finite subgroups of $SU(3)$,
\cite{Grimus}, cannot be directly applied to the classification of the Higgs symmetries in 3HDM because
no criterion was given which of the groups are realizable and which are not.

An obvious direction of future research is to understand phenomenological consequences of the symmetries found.
This includes, in particular, a study of how the symmetries of the potential
are broken, and what is the effect of these symmetries on the physical Higgs boson properties.
One particular result already obtained here is the possibility to accommodate scalar dark matter stabilized by the $Z_p$
symmetry group, for any $p \le 2^{N-1}$, \cite{ZpDM}.

\section*{Acknowledgements}
We are grateful to Pedro Ferreira and Jo\~ao Silva for valuable comments and suggestions.
This work was supported by the Belgian Fund F.R.S.-FNRS via the
contract of Charg\'e de recherches, and in part by grants RFBR No.11-02-00242-a and No.10-01-0391, and NSh-3810.2010.2.

\appendix

\section{Maximal abelian subgroups in $PSU(N)$}\label{appendix-maximal-abelian-PSU}

Let $V=\mathbb{C}^N$ be a unitary space of dimension $N$ equipped with unitary
form $(u,v)$. If $L$ is a subspace of $V$, then $L^\dagger$ denotes the
orthogonal complements to $L$ in $V$. Now $SU(N)$ is the group of all
transformations of determinant $1$ preserving the form, i.e.,
$$SU(N)=\{\x\mid \det \x=1 \text{ and } \x^\dagger\cdot \x=1\}.$$
We denote the group of all transformations preserving the form by $GU(N)$.
Clearly $SU(N)$ is a normal subgroup of $GU(N)$ ($SU(N)\unlhd GU(N)$).

In the following lemma we collect known facts and we provide the proofs for the
reader's convenience.

\begin{lemma}\label{SimpleFactsUnitary}
Let $L$ be a subspace of $V$ and $\x$ be a linear transformation of $V$.
Then the following statements hold.
\begin{itemize}
 \item[{\em (1)}] If $L$ is $\x$-invariant, then $L^\dagger$ is
$\x^\dagger$-invariant.
\item[{\em (2)}] If $\x\in GU(N)$, then $\x(v)=\lambda v$ implies
$\x^\dagger(v)=\bar\lambda v$.
\item[{\em (3)}] If $\x\in GU(N)$, then there exists an orthonormal basis
consisting of eigenvectors of~$\x$. Moreover, if
$Spec(\x)=\{\lambda_1,\ldots,\lambda_k\}$ is the set of all eigenvalues of
$\x$ and  $V_\lambda=\{v\in V\mid \x(v)=\lambda v\}$, then there
exists the orthogonal decomposition
$V=V_{\lambda_1}\oplus^\dagger\ldots\oplus^\dagger V_{\lambda_k}$.
\item[{\em (4)}] If $\x,\y\in GU(N)$ commute, $\lambda$ is an eigenvalue
of $\x$, then $V_\lambda$ is $\y$-invariant.
\item[{\em (5)}] If $A\leq GU(N)$ is abelian, then $A$ is conjugate to a
subgroup of $D(N)$in $GU(N)$, where $D(N)$ is a group of all diagonal matrices
in $GU(N)$. In particular, if $A\leq SU(N)$ is abelian, then $A$ is conjugate
to a subgroup of $D(N)\cap SU(N)$ in~$SU(N)$.
\item[{\em (6)}] Suppose $H$ is the stabilizer in $SU(N)$ of $L$ and
$L^\dagger$, i.e., $H=( GU(L)\times GU(L^\dagger))\cap SU(N)$. Assume that
$L,L^\dagger\not=0$. Let
$\overline{\phantom{G}}:SU(N)\rightarrow PSU(N)$ be the reduction modulo
scalars. Then the natural
projection $\pi:\overline{H}\rightarrow GU(L)$ is surjective. Moreover, if $T$
is a maximal torus of $H$, then the projection of $T$ into $GU(L)$ is
a maximal torus of $GU(L)$.
\end{itemize}
\end{lemma}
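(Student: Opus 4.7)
The plan is to establish the six parts in order, since each can rely on its predecessors. Parts (1)--(4) are standard facts from the spectral theory of unitary operators, which I would dispose of briefly. For (1), I would unwind the definition: for $u\in L$ and $v\in L^\dagger$, the identity $(u,\x^\dagger v)=(\x u, v)=0$ forces $\x^\dagger v\in L^\dagger$. For (2), I would first observe that unitarity gives $\x^\dagger=\x^{-1}$ and that eigenvalues of unitary maps lie on the unit circle, so from $\x(v)=\lambda v$ one reads $\x^\dagger(v)=\lambda^{-1}v=\bar\lambda v$. Part (3) is the spectral theorem; I would argue by induction on $N$, extracting an eigenspace $V_\lambda$, noting that $V_\lambda$ is both $\x$- and $\x^\dagger$-invariant by (2), and then using (1) twice to conclude $V_\lambda^\dagger$ is $\x$-invariant, so the induction hypothesis applies there. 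Part (4) is the one-line computation $\x\y v=\y\x v=\lambda\y v$.

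For (5) I would use (3) and (4) to produce a common orthonormal eigenbasis for any commuting family of operators in $A\leq GU(N)$. Writing the unitary change-of-basis matrix $g$, conjugation by $g$ brings $A$ into $D(N)$, establishing the $GU(N)$ version. For the $SU(N)$ refinement, the obstruction is that $\det g$ need not equal $1$; I would remove it by rescaling a single basis vector by a phase $e^{i\theta}$ chosen so that $\det g=1$, which does not disturb orthonormality or the diagonal shape of the conjugated subgroup.

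Part (6) is where the real content lies, and the step I expect to be the main obstacle. I would first dispose of surjectivity of $\pi$: given any $g\in GU(L)$ with $\det g=\lambda$, the hypothesis $L^\dagger\neq 0$ lets me pick $g'\in GU(L^\dagger)$ with $\det g'=\lambda^{-1}$ (for instance, a diagonal matrix with one entry $\lambda^{-1}$ and the rest $1$), so that $h=g\oplus g'$ lies in $H$ and restricts to $g$ on $L$. Composing with the reduction $H\to\overline{H}$ gives the claimed surjection $\overline{H}\twoheadrightarrow GU(L)$.

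For the maximal-torus statement, let $T$ be a maximal torus of $H$. Using (5), I may conjugate inside $H$ so that $T\subseteq D(N)\cap SU(N)$, and by maximality $T=(D(L)\times D(L^\dagger))\cap SU(N)$, where $D(L),D(L^\dagger)$ are the full diagonal tori of $GU(L),GU(L^\dagger)$. The projection $\pi(T)$ is then the set of diagonal matrices in $D(L)$ that arise as the $L$-component of some $h\in T$; but for any $t\in D(L)$ with $\det t=\mu$, the same construction as in the surjectivity step (using $\dim L^\dagger\geq 1$) produces $t'\in D(L^\dagger)$ with $\det t'=\mu^{-1}$, showing $\pi(T)=D(L)$, which is a maximal torus of $GU(L)$. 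The subtle point to verify is that $D(L)$ really is maximal in $GU(L)$, but this is immediate since any torus of $GU(L)$ is conjugate to a subgroup of $D(L)$ by (5) applied to $GU(L)$ itself. This closes the induction-like structure and completes the lemma.
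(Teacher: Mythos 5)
Your proposal is correct and follows essentially the same route as the paper: pointwise adjoint computations for (1)--(2), the spectral-theorem induction for (3), simultaneous diagonalization for (5), and the block-diagonal determinant-compensation trick $g\oplus g'$ with $\det g'=(\det g)^{-1}$ for (6). The only cosmetic differences are that the paper proves (2) by a normality computation rather than via $\x^\dagger=\x^{-1}$ and $|\lambda|=1$, and that you spell out the maximal-torus claim in (6) (which the paper leaves implicit after the surjectivity construction); both variants are fine.
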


\begin{proof}
By definition of $\x^\dagger$, for every $u,v\in V$ we have
$(\x(u), v)=(u,\x^\dagger(v))$, whence~(1).

(2) Assume that $\x(v)=\lambda v$. Then
\begin{multline*}
0=(\x(v)-\lambda v,\x(v)-\lambda
v)=((\x-\lambda1)(v),(\x-\lambda1)(v))
=(v,(\x^\dagger-\bar\lambda1)(\x-\lambda1)(v))=\\(v,
(\x-\lambda1)(\x^\dagger-\bar\lambda1)(v))=(
(\x^\dagger-\bar\lambda 1)(v),(\x^\dagger-\bar\lambda1)(v)),
\end{multline*}
whence $\x^\dagger(v)-\bar\lambda v=0.$

(3) Since $\mathbb{C}$ is algebraically closed, there exists an eigenvalue $\lambda$ of $\x$. Therefore there exists an eigenvector $v$ such that $\x(v)=\lambda v$. Clearly we may assume that $\|v\|=1$. Consider the line $L(v)$ spanned by $v$. In view of item (2) of the lemma, $L(v)$ is $\x^\dagger$-invariant, hence item (1) of the lemma implies that $L(v)^\dagger$ is $\x$-invariant. Therefore we obatin a decomposition $V=L(v)\oplus L(v)^\dagger$ of $V$ into an orthogonal sum of $\x$-invariant subspaces. Induction on $\dim(V)$ completes the proof of this item.

(4) For every $v\in V_\lambda$ we have $\x(\y(v))=\y(\x(v))=\lambda \y(v)$, i.e., $\y(v)\in V_\lambda$. Therefore $V_\lambda$ is a $\y$-invariant subspace.

(5) Assume that $A\leq GU(N)$ is abelian. Clearly we may assume that $A$ is
nonscalar. Choose $x\in A\setminus Z(GU(N))$. Given $\lambda \in \mathbb{C}$ consider $V_\lambda=\{v\in V\mid
\x(v)=\lambda v\}$. Denote by $Spec(\x)$ the set of all eigenvalues of $\x$. Since $\mathbb{C}$ is algebraically closed, we obtain that $Spec(\x)$ is nonempty. Moreover, since $\x\in A\setminus Z(GU(N))$, it follows that $Spec(\x)$ contains at least two elements. In view of the proof of item (3) we obtain the orthogonal decomposition $$V=\oplus_{\lambda\in Spec(\x)} V_\lambda.$$ Choose $y\in A$. Then, for every $\lambda\in Spec(\x)$ the subspace $V_\lambda$ is $y$-invariant, hence $V_\lambda$ is $A$-invariant.  Induction on $\dim(V)$ completes the proof of this item.

(6) A matrix  from $( GU(L)\times GU(L^\dagger))\cap SU(N)$ has the following
form: $$\left(\begin{array}{cc}
               A&0\\
               0&B
              \end{array}\right)$$ and $\vert A\vert \cdot \vert B\vert=1$.
Assume that $A\in GU(L)$ and $\vert A\vert=\mu$. Then $\vert \mu\vert=1$, i.e.
$\mu\cdot \mu^\ast=1$, so if $D=diag(\mu^\ast,1,\ldots,1)$, then
$$\x=\left(\begin{array}{cc}
               A&0\\
               0&D
              \end{array}\right)\in H.$$ Moreover $\pi(\x)=A$, whence~(6).

\end{proof}

Consider the canonical epimorphism
\be
\overline{\phantom{G}}:\ SU(N)\to SU(N)/Z(SU(N))=\overline{SU(N)} \simeq PSU(N)\,,
\ee
where $Z(SU(N)) \simeq \Z_N$ is the center of $SU(N)$. Given subgroup $\overline{A}$ of $PSU(N)$ we denote by $A$ the complete preimage of $\overline{A}$ in $SU(N)$. The following statement is evident.

\begin{proposition}\label{prop-abelian-PSU}
Let $\overline{A}$ be an abelian subgroup of $PSU(N)$.
Then $A$ is either an abelian group or a nilpotent group of class $2$ and $Z(SU(N))\leq Z(A)$.
\end{proposition}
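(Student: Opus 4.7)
The plan rests on two elementary observations about the canonical projection $\overline{\phantom{G}}\colon SU(N)\to PSU(N)$ whose kernel is $Z:=Z(SU(N))\simeq \Z_N$. Both observations are purely group-theoretic and do not use anything specific about $PSU(N)$ beyond the structure of its center; in fact, the paper already flags the statement as evident, so the plan is essentially to organize the two-line argument cleanly.

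First, I would establish the centralizer inclusion $Z\leq Z(A)$. Every element of $Z$ is a scalar multiple of the identity matrix, hence it commutes with every element of $SU(N)$ and in particular with every element of $A$. Moreover, since $A$ is by definition the complete preimage of $\overline{A}$, and $\overline{A}$ contains the identity of $PSU(N)$, we have $Z\leq A$. Combining, $Z\leq Z(A)$, which is the last clause of the proposition.

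Second, I would exploit the abelianness of $\overline{A}$ to control $[A,A]$. For any $a,b\in A$, the images $\overline a,\overline b\in\overline A$ commute, so $\overline{[a,b]}=[\overline a,\overline b]=\overline 1$, i.e., $[a,b]\in Z$. Hence $[A,A]\leq Z$, and combining with the first step, $[A,A]\leq Z\leq Z(A)$. This says exactly that $A$ is nilpotent of class at most $2$. The dichotomy in the statement is now immediate: if $[A,A]=\{1\}$ then $A$ is abelian, and otherwise $A$ is non-abelian with $[A,A]\leq Z(A)$, i.e., nilpotent of class exactly $2$.

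There is no genuine obstacle; the only point to be careful about is that the inclusion $Z\leq Z(A)$ uses two distinct facts simultaneously, namely that $Z$ acts by scalars (so it centralizes $A$) and that $Z\leq A$ (so it is genuinely a subgroup of $A$, not merely of its centralizer in $SU(N)$). Both follow trivially from the definitions, and with them in hand the proposition reduces to the single commutator calculation above.
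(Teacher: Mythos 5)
Your proof is correct and is essentially the argument the paper compresses into its one-line remark: $Z(SU(N))\leq A$ because $A$ is a complete preimage, $Z(SU(N))$ is central so $Z(SU(N))\leq Z(A)$, and abelianness of $\overline{A}$ forces $[A,A]\leq Z(SU(N))\leq Z(A)$, giving nilpotency class at most $2$. No meaningful difference in approach; you have simply written out the steps the authors call evident.
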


\begin{proof}
Since $Z(SU(N))\leq Z(A)$ for every $A\leq SU(N)$, the statement is evident.
\end{proof}

If $\overline{A}$ is a maximal abelian subgroup of $PSU(N)$ such that $A$ is abelian, then $A$ must be a maximal abelian subgroup of $SU(N)$,
and item (5) of Lemma \ref{SimpleFactsUnitary} implies that $A$ is a maximal torus.

Now we find the structure of $A$ in case, when $A$ is nonabelian. We assume
that $\overline{A}$ is a maximal abelian subgroup of~$PSU(N)$.

\begin{lemma}\label{CenterOfA}
If $A$ is nonabelian, then $Z(A)=Z(SU(N))$.
\end{lemma}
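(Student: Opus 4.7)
My plan is to prove the reverse inclusion $Z(A) \leq Z(SU(N))$ (the other containment being given by the preceding proposition) by combining the maximality of $\overline{A}$ with Schur's lemma: the goal is to force $A$ to act irreducibly on $V = \mathbb{C}^N$, after which Schur applied to any $z \in Z(A)$ immediately yields the conclusion.

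First I would translate maximality into centralizer language. If $g \in SU(N)$ commutes with every element of $A$, then $\overline{g}$ commutes with every element of $\overline{A}$, and maximality forces $\overline{g} \in \overline{A}$, so $g \in A$. Hence $C_{SU(N)}(A) = A \cap C_{SU(N)}(A) = Z(A)$. Next I would decompose $V$ into its $A$-isotypic components $V = \bigoplus_i W_i \otimes \mathbb{C}^{m_i}$, with the $W_i$ pairwise non-isomorphic irreducibles. By Schur's lemma $C_{GL(V)}(A) = \prod_i (I_{W_i} \otimes GL_{m_i}(\mathbb{C}))$, so $C_{SU(N)}(A)$ contains a copy of $SU(m_i)$ whenever $m_i \geq 2$. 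Since $Z(A) = C_{SU(N)}(A)$ is abelian, this forces $m_i = 1$ for every $i$, so $V = \bigoplus_{i=1}^k W_i$ is multiplicity-free.

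If $k=1$, then Schur applied to any $z \in Z(A) = C_{SU(N)}(A)$ shows that $z$ is a scalar, so $z \in Z(SU(N))$, and we are done. The main obstacle I anticipate is ruling out the case $k \geq 2$. In that case $Z(A) = C_{SU(N)}(A)$ equals the positive-dimensional scalar-by-block torus $W = \{(\mu_i I_{W_i})_i : \prod_i \mu_i^{\dim W_i} = 1\}$, so $A$ must contain $W$ in its center. To contradict the maximality of $\overline{A}$, I would apply Lemma \ref{SimpleFactsUnitary}(6) with $L = W_1$ and $L^\dagger = W_2 \oplus \cdots \oplus W_k$, together with induction on $N$, in order to exhibit an element $g \in SU(N) \setminus A$ whose image $\overline{g}$ centralises $\overline{A}$ in $PSU(N)$. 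Concretely, such a $g$ is constructed as a nontrivial intertwiner of some $\rho_i$ with a twist $\zeta \otimes \rho_i$ by a character $\zeta : A \to Z(SU(N))$ that is not of the inner form $a \mapsto [b,a]$; the existence of such a $\zeta$ relies on the fact that each projective image $\overline{\rho_i(A)} \leq PU(W_i)$ is abelian, which in turn is a consequence of $[A,A] \leq Z(SU(N))$. Producing this noninner $\zeta$ and assembling the intertwiners into a single element of $SU(N)$ is the main technical difficulty; once it is done, maximality is violated, $k=1$ is forced, and the lemma follows.
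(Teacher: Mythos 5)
Your reduction to a multiplicity-free decomposition is correct and is a genuinely different route from the paper's: you exploit the identity $C_{SU(N)}(A)=Z(A)$ (valid by maximality of $\overline{A}$) together with Schur's lemma, whereas the paper picks a nonscalar element of $Z(A)$, decomposes $V$ into its eigenspaces, and argues about the block projections of $A$. Your irreducible case ($k=1$) is complete. But the case $k\ge 2$, which you explicitly leave as ``the main technical difficulty,'' is a genuine gap, and it cannot be closed along the lines you sketch, because the non-inner character $\zeta$ you need does not exist in general. Concretely, for $N=4$ let $\sigma_1,\sigma_2,\sigma_3$ be the Pauli matrices, put $W=\{\mathrm{diag}(\mu_1,\mu_1,\mu_2,\mu_2):\mu_1^2\mu_2^2=1\}$, $a=\mathrm{diag}(i\sigma_1,i\sigma_1)$, $b=\mathrm{diag}(i\sigma_3,i\sigma_3)$, and $A=\langle W,a,b\rangle$. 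Then $[a,b]=-1_4\in Z(SU(4))$, so $\overline{A}$ is abelian; a direct computation shows $\{g\in SU(4):[g,x]\in Z(SU(4))\ \forall x\in A\}=A$ (any such $g$ must centralize the connected torus $W$, hence is block-diagonal, and each block must normalize the lines $\mathbb{C}\sigma_1$ and $\mathbb{C}\sigma_3$ with matching signs), so $\overline{A}$ is self-centralizing and therefore maximal abelian. Yet $A$ is nonabelian, $V$ splits into $k=2$ inequivalent irreducibles, and $Z(A)=W$ is a one-dimensional torus strictly larger than $Z(SU(4))$. In this configuration every $\zeta$ with $\rho_i\simeq\zeta\otimes\rho_i$ is implemented by an intertwiner that already lies in $A$, so your construction cannot produce $g\notin A$ --- and it must not, since the conclusion of the lemma itself fails here.

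This pins down where the difficulty really lives: the statement is only safe for prime $N$, and the paper's own proof stumbles at the same spot, namely the unproved assertion that the projection of $\overline{A}$ into $GU(N_i)$ is contained in a maximal torus (the projection of $A$ onto a block need not be abelian, because commutators in $[A,A]\le Z(SU(N))$ project to nontrivial scalars of the block). For prime $N$ your gap closes in one line, with none of the intertwiner machinery: if $k\ge 2$, then for $x,y\in A$ the commutator $[x,y]=\omega^j 1_N$ restricts to $\omega^j 1_{N_i}$ on each isotypic component, and $\det([x,y]|_{W_i})=1$ forces $N\mid jN_i$ with $0<N_i<N$, hence $N\mid j$ and $[x,y]=1$; thus a nonabelian $A$ must act irreducibly and your $k=1$ argument finishes the proof. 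I would either restrict the lemma to prime $N$ (which covers the 3HDM analysis where it is actually used) or explicitly treat the mixed torus-times-Heisenberg groups that arise for composite $N$.
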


\begin{proof}
 Inclusion $Z(SU(N))\leq Z(A)$ is evident. Assume that there exists $\x\in
Z(A)\setminus Z(SU(N))$. Then $\x$ is nonscalar, so
$Spec(\x)=\{\lambda_1,\ldots,\lambda_k\}$ and $k\ge2$. In view of Lemma
\ref{SimpleFactsUnitary} items (3) and (4), $A\leq
C_{SU(N)}(\x)=GU(N_1)\times\ldots\times GU(N_k)$, where
$N_i=\dim(V_{\lambda_i})$. Now Lemma \ref{SimpleFactsUnitary} items (5) and (6)
implies that the projection of $\overline{A}$ into $GU(N_i)$ is contained in a
maximal torus of $GU(N_i)$. Since $\overline{A}$ is a maximal abelian subgroup
of $PSU(N)$, we obtain that the projection of $A$ coincides with a maximal
torus of  $GU(N_i)$ for every $i$. Hence $\overline{A}$ is   a
maximal torus of $PSU(N)$, and so $A$ is    a maximal torus of
$SU(N)$. In particular, $A$ is abelian, a contradiction.
\end{proof}

Since $Z(A)=Z(SU(N))\simeq \mathbb{Z}_N$, we obtain that the exponent of $Z(A)$
equals $N$. In view of \cite[5.2.22(i)]{Rob}, the exponent of $A$ divides
$N^2$, in particular, $\pi(A)= \pi(N)$,
where $\pi(A)$ denotes the set of prime divisors of $\vert A\vert$ and $\pi(N)$ denotes the set of prime divisors of $N$.

\begin{lemma}\label{A_1IsFinite}
$A$ is finite.
\end{lemma}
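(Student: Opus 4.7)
The plan is to combine an algebraic bound on the exponent of $A$ with a topological closedness argument, so that the finiteness of $A$ reduces to elementary Lie theory.

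First I would establish that the exponent of $A$ is bounded. By Lemma~\ref{CenterOfA}, $Z(A)=Z(SU(N))\simeq \mathbb{Z}_N$, so every element of $Z(A)$ has order dividing $N$. Since $A$ is nilpotent of class $2$, $[A,A]\leq Z(A)$, so every commutator has order dividing $N$. In any class-$2$ nilpotent group the identity $[x^n,y]=[x,y]^n$ holds, hence $x^N$ commutes with every $y\in A$, i.e.\ $x^N\in Z(A)$. Thus $A/Z(A)$ has exponent dividing $N$, and consequently the exponent of $A$ itself divides $N^2$ (as already indicated in Theorem~\ref{AbelianSubgroupsPSUN}).

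Next I would show that $A$ is topologically closed in $SU(N)$. Because the commutator map on $PSU(N)$ is continuous, the topological closure $\mathrm{cl}(\overline{A})$ of the abelian subgroup $\overline{A}$ is again an abelian subgroup. The maximality of $\overline{A}$ among abelian subgroups of $PSU(N)$ forces $\mathrm{cl}(\overline{A})=\overline{A}$, so $\overline{A}$ is closed. Since the canonical projection $SU(N)\to PSU(N)$ is continuous, the preimage $A$ is closed in $SU(N)$.

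The conclusion now follows from standard Lie theory. A closed subgroup of the compact Lie group $SU(N)$ is itself a compact Lie group, hence has only finitely many connected components. If $A$ were infinite, its identity component $A^0$ would be a compact connected Lie group of positive dimension and would therefore contain a nontrivial one-parameter subgroup isomorphic to $U(1)$. Such a subgroup contains elements of arbitrarily large order, contradicting the exponent bound obtained in the first step. Hence $A$ must be finite.

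The main obstacle is the closedness step: one has to justify that topological closure can be applied to an abstract (set-theoretic) maximal abelian subgroup and still stay inside it. Once this point is made, everything else is either a short algebraic computation or a standard fact about compact Lie groups.
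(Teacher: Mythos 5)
Your argument is correct, but it follows a genuinely different route from the paper's. The paper stays entirely within abstract group theory: it takes a maximal normal abelian subgroup $H$ of $A$, invokes the standard fact that $C_A(H)=H$ in a nilpotent group, observes that $H$ sits inside a maximal torus of $SU(N)$ and has exponent dividing $N^2$ (so the bounded-exponent elements of the torus form a finite set and $H$ is finite), and then embeds $A/H$ into $\mathrm{Aut}(H)$ to conclude. You instead use topology and Lie theory: the closure of an abelian subgroup is abelian, so maximality of $\overline{A}$ forces $\overline{A}$ to be closed, hence $A$ is a closed (thus compact Lie) subgroup of $SU(N)$, and a positive-dimensional identity component would contradict the exponent bound $N^2$ that you re-derive from Lemma~\ref{CenterOfA} via the class-$2$ identity $[x^n,y]=[x,y]^n$. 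Both proofs are sound. Your approach buys independence from the centralizer machinery ($C_A(H)=H$ and the $\mathrm{Aut}(H)$ embedding) at the cost of importing the closed-subgroup theorem and the structure of compact connected Lie groups; the paper's proof is elementary modulo two standard group-theoretic citations. One small imprecision on your side: a one-parameter subgroup of $A^0$ need not be closed, hence need not be isomorphic to $U(1)$; it suffices to say that a positive-dimensional compact connected Lie group has a nontrivial maximal torus, or simply that a nontrivial one-parameter subgroup contains elements of arbitrarily large order, which already contradicts the exponent bound.
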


\begin{proof}
Let $H$ be a maximal normal abelian subgroup of $A$. Then $C_{A}(H)=H$ (see
\cite[5.2.3]{Rob}) and $H$ is contained in a maximal torus of
$SU(N)$. In particular, $H$ is generated by a finite number of elements. Since
$\pi(H)\subseteq \pi(A)=\pi(N)$ is finite and the exponent of $H$ divides $N^2$, we obtain that $H$ is finite. Since
$H=C_{A}(H)$, it follows that $A/H$ is isomorphic to a subgroup of $Aut(H)$,
in particular $A/H$ is finite. Therefore $A$ is finite.
\end{proof}

All the information about abelian subgroups of $PSU(N)$ obtained above
was assembled in Theorem~\ref{AbelianSubgroupsPSUN} in the main text.

We finish by giving an example of subgroups satisfying statement (2) of Theorem~\ref{AbelianSubgroupsPSUN}.
Let $\omega$ be the primitive $N$-th root of $1$ in $\mathbb{C}$. Consider
$\x=diag(1,\omega,\omega^2,\ldots,\omega^{N-1})$,
$\y=e_{1,2}+e_{2,3}+\ldots+e_{N-1,N}+e_{N,1}$, where $e_{i,j}$ is the matrix
with $1$ on the $(i,j)$-place and $0$ on the remaining places. Then $M=\langle
\x,\y\rangle$ is a nilpotent subgroup of class $2$ of $SU(N)$ for $N$ odd,
$Z(N)=Z(SU(N))$ and the exponent of $M$ equals~$N$.

\section{Abelian symmetries in 3HDM with antiunitary transformations}\label{appendix-antiunitary}

Here we explicitly describe embedding of each of the unitary abelian groups (\ref{list3HDM}) into a larger abelian group
that contains antiunitary transformations.
Throughout this section we will use the following notation: if a cyclic group $\Z_q$ is generated by an antiunitary
transformation, we indicate it by an asterisk: $\Z_q^*$.

\subsection{Embedding $U(1)_1\times U(1)_2$}
We proved in the general case that the maximal torus cannot be embedded in a larger abelian group.
For illustration, let us repeat the argument in the specific case of 3HDM.
At the first step of the strategy we search for a matrix $b\in PSU(3)$ that satisfies $aba=b$ for all $a \in T$.
Since $a$ is diagonal, each element $(aba)_{ij}$ is equal to $b_{ij}$ up to a phase rotation $\psi_{ij}$:
\be
\psi_{ij} =  \left(\begin{array}{ccc}
-2\alpha - 4\beta/3 & -\beta/3 & -\alpha - \beta/3 \\
- \beta/3 & 2\alpha + 2\beta/3 & \alpha + 2\beta/3 \\
-\alpha - \beta/3 & \alpha + 2\beta/3 & 2\beta/3
\end{array}\right)\,.\label{psi-ij}
\ee
In order for an element $b_{ij}$ to be non-zero, the corresponding phase rotation must be zero or multiple of $2\pi$,
modulo to $2\pi k/3$ along the diagonal.
It is impossible to find such $b$ for arbitraty $\alpha$ and $\beta$. Therefore. $T = U(1)_1 \times U(1)_2$
cannot be embedded into a larger abelian group.

\subsection{Embedding $U(1)$}

There are two distinct sorts of $U(1)$ groups inside $T$: $U(1)_1$-type and $U(1)_2$-type.
Consider first $U(1)_1$.
The phase rotations for the group $U(1)_1$ ($\beta=0$, arbitrary $\alpha$) allow $b$ to have non-zero entries $b_{12}$, $b_{21}$ and $b_{33}$
only. We are free to choose $b_{12} = b_{21} = 1$, $b_{33}=1$ (the fact that $\det b = -1$ instead of $1$ is inessential;
we can always multiply all the transformations by the overall $-1$ factor). Then, the transformation $J'$ which commutes with any $a \in U(1)_1$ is
\be
J' =  \left(\begin{array}{ccc}
0 & 1 & 0 \\
1 & 0 & 0 \\
0 & 0 & 1
\end{array}\right)\cdot J\,,\quad
aJ' = J'a \quad \forall a \in U(1)_1\,.
\label{J'-in-U1}
\ee
Next, we search for transformations $x \in PSU(3)$ such that $xJ'$ also commutes with any element $a \in U(1)_1$;
such transformations $x$ form the centralizer of $U(1)_1$.
From the equation $axa^{-1} = x$, analyzed with the same technique of phase rotations,
we find that $x$ must be diagonal. By checking that the product of $x_1J'$ and $x_2J'$ stays inside $U(1)_1$, we can conclude
that $x$ must belong to one of the $U(1)_1$-type groups $X_{\xi}$:
\be
x =  \left(\begin{array}{ccc}
e^{-i\gamma} & 0 & 0 \\
0 & e^{i\gamma} & 0 \\
0 & 0 & 1
\end{array}\right)
\left(\begin{array}{ccc}
e^{i\xi} & 0 & 0 \\
0 & e^{i\xi} & 0 \\
0 & 0 & e^{-2i\xi}
\end{array}\right)\,. \label{xgamxi}
\ee
Here $\xi \in [0,2\pi/3)$ is an arbitrary but fixed parameter specifying which $X_{\xi}$ group we take,
while $\gamma$ is the running angle parametrizing the elements of this group.
At this stage, any choice of $\xi$ is acceptable.
Thus, we embedded $U(1)_1$ into an abelian group generated by
$\langle U(1)_1, J''_{\xi}\rangle \simeq U(1) \times \Z_2^*$,
where
\be
J''_{\xi} =
\left(\begin{array}{ccc}
e^{i\xi} & 0 & 0 \\
0 & e^{i\xi} & 0 \\
0 & 0 & e^{-2i\xi}
\end{array}\right)\cdot J' =
\left(\begin{array}{ccc}
0 & e^{i\xi} & 0 \\
e^{i\xi} & 0 & 0 \\
0 & 0 & e^{-2i\xi}
\end{array}\right)\cdot J\,.\label{J''xi}
\ee

Let us now see whether this group is realizable and which $\xi$ must be chosen.
The only $U(1)_1$ symmetric terms in the potential are
\be
\lambda (\phi_1^\dagger \phi_3)(\phi_2^\dagger \phi_3)  + \lambda^* (\phi_3^\dagger \phi_1)(\phi_3^\dagger \phi_2)\,.
\label{extratermsU1}
\ee
This sum of two terms is indeed invariant under $J''_{\xi}$ provided $\xi = \psi_\lambda/3$, where $\psi_\lambda$
is the phase of $\lambda$. This prescription uniquely specifies which group $X_\xi \subset PSU(3)$.
Besides, in order to guarantee that the $T$-symmetric terms (\ref{3HDMpotential-Tsymmetric}) are invariant
under $J''$, we must set
\be
m_{11}^2=m_{22}^2\,,\quad \lambda_{11}=\lambda_{22}\,,\quad \lambda_{13} = \lambda_{23}\,,\quad \lambda'_{13} = \lambda'_{23}\,.
\label{U1-restrictions}
\ee
However upon this reduction of free parameters we observe that the resulting potential
acquires an additional {\em unitary} symmetry: the exchange of the field two doublets $\phi_1 \leftrightarrow \phi_2$.
This transformation does not commute with $U(1)_1$. Therefore, the true symmetry group of such potential is non-abelian, though it contains
the desired abelian subgroup $U(1) \times \Z_2^*$.
According to our definition, we conclude that $U(1) \times \Z_2^*$ is not realizable in 3HDM.

Consider now $U(1)_2$: $\alpha=0$, arbitrary $\beta$. It follows from (\ref{psi-ij}) that no non-trivial matrix $b \in PSU(3)$
can satisfy $aba = b$. Therefore,  $U(1)_2$ cannot be embedded into a larger abelian group with antiunitary transformations.
Note also that selecting various diagonal $U(1)$ subgroups of $T$, for example by setting $\alpha+\beta/3 = 0$, will result
just in another version of $U(1)_1$.
Finally, we note that the unitary $U(1)\times \Z_2$ emerges only when the continuous group is of the $U(1)_2$-type.
Therefore,  $U(1)\times \Z_2$ cannot be embedded into a larger realizable abelian group as well.

The overall result of the last two subsections is that in 3HDM all realizable continuous abelian groups are necessarity unitary and cannot contain
antiunitary transformations.

\subsection{Embedding $\Z_4$}

The $\Z_4$ symmetry group with generator $a$ arises as a subgroup of $U(1)_1$. Therefore, $b$ and $J'$ can be chosen
as before, (\ref{J'-in-U1}), but the new conditions for $x$ should now be checked.
This transformation can still be represented as in (\ref{xgamxi}), but the angle $\gamma$ can take discrete values:
\be
\gamma = {\pi\over 2}k\quad \mbox{or} \quad \gamma = {\pi \over 4} + {\pi\over 2}k\,,\quad k \in \Z\,.
\ee
Correspondingly, two possibilities for embedding the $\Z_4$ group arise: $\Z_4 \times \Z_2^*$ and $\Z_8^*$.

Consider first the $\Z_4 \times \Z_2^*$ embedding. The $\Z_2^*$ subgroup is generated $J''$ in (\ref{J''xi}),
where, as before, $\xi =  \psi_\lambda/3$.
As before, the potential satisfies (\ref{U1-restrictions}), includes (\ref{extratermsU1}), but in addition it now contains
\be
\lambda' (\phi_1^\dagger\phi_2)^2 + \lambda^{\prime *} (\phi_2^\dagger\phi_1)^2\,,\label{extratermZ4}
\ee
Note that these terms are invariant under $J''$, and therefore, under the full $\Z_4 \times \Z_2^*$.
which is also $J''$-symmetric when $\xi_1 = \xi_2$.
However, just as in the previous case, restrictions (\ref{U1-restrictions}) result in a potential which is symmetric under another $\Z_2$ group
generated by the unitary transformation
\be
d =  \left(\begin{array}{ccc}
0 & e^{i\delta} & 0 \\
e^{-i\delta} & 0 & 0 \\
0 & 0 & 1
\end{array}\right)\quad \mbox{with}\quad \delta = {\psi_{\lambda'} \over 2}\,, \label{dZ2}
\ee
where $\psi_{\lambda'}$ is the phase of $\lambda'$ in (\ref{extratermZ4}). Since $d$ does not commute with
phase rotations, therefore, the resulting symmetry group of the potential is non-abelian.
Hence, the  $\Z_4 \times \Z_2^*$ symmetry group is not realizable.

Consider now the $\Z_8^*$ group generated by
\be
J'''_{\xi} =  \left(\begin{array}{ccc}
0 & e^{i\xi - i\pi/4} & 0 \\
e^{i\xi+i\pi/4} & 0 & 0 \\
0 & 0 &  e^{-2i\xi}
\end{array}\right)\cdot J\label{J'''}
\ee
with the property $(J''')^2 = a$, the generator of the $\Z_4$.
Note that upon action of $J'''$ the term $(\phi_1^\dagger \phi_2)^2$ just changes its sign; therefore,
such term cannot appear in the potential. But in this case we are left with only one type of $\Z_4$-symmetric
terms. Then, according to the general discussion, rank$X(V) = 1$ and, therefore, the potential
becomes symmetric under a continuous symmetry group.
Thus, $\Z_8^*$ is not realizable as a symmetry group of a 3HDM potential.

\subsection{Embedding $\Z_3$}

The $\Z_3$ subgroup of the maximal torus $T$ is generated by the transformation $a$ which rotates the
phases of the doublets by $(-2\pi/3,\, 2\pi/3,\,0)$.
This group also arises from the $U(1)_1$-type group, therefore the representation for $J'$ in (\ref{J'-in-U1})
and $x$ in (\ref{xgamxi}) are still valid.
Closing the group under product requires that $\gamma$ in (\ref{xgamxi}) is a multiple of $\pi/3$.
We can introduce
\be
J''_{\xi} =  \left(\begin{array}{ccc}
0 & e^{i\xi-i\pi/3} & 0 \\
e^{i\xi+i\pi/3} & 0 & 0 \\
0 & 0 & e^{-2i\xi}
\end{array}\right)\cdot J\,,\label{J'-Z3}
\ee
and since $(J'')^2 = a$, this transformation generates the group $\Z_6^*$.

The $\Z_3$-symmetric terms in the potential are
\be
\lambda_1 (\phi_1^\dagger \phi_2)(\phi_1^\dagger \phi_3) +
\lambda_2 (\phi_2^\dagger \phi_3)(\phi_2^\dagger \phi_1) +
\lambda_3 (\phi_3^\dagger \phi_1)(\phi_3^\dagger \phi_2) + h.c.\label{Z3symmetric}
\ee
with complex $\lambda_i$.
These terms become symmetric under $\Z_6^*$, if the following conditions are satisfied:
\be
|\lambda_1|  = |\lambda_2| \quad \mbox{and} \quad \psi_1 + \psi_2 + \psi_3 = \pi\,,\label{Z6conditions}
\ee
where $\psi_i$ are phases of $\lambda_i$.
Indeed, one can check that in this case the sum (\ref{Z3symmetric}) is left invariant under $J''_\xi$
with $\xi = -\psi_3/3$.
It is, therefore, possible to have a $\Z_6^*$-symmetric potential by requiring (\ref{Z6conditions}) and applying the usual
restrictions (\ref{U1-restrictions}).

However, just as in the previous cases, the resulting potential becomes symmetric under a larger unitary symmetry group,
and therefore $\Z_6^*$ is not realizable as well. The transformation that leaves the potential symmetric
is of type (\ref{dZ2}) with $\delta = (\psi_1-\psi_2)/3$.

\subsection{Embedding $\Z_2\times \Z_2$}

The $\Z_2 \times \Z_2 \subset T$ symmetry can be realized as a group of simultaneous sign flips of any pair of doublets
(or alternatively of independent sign flips of any of the three doublets).
If the simultaneous sign flip of doublets $i$ and $j$ are denoted as $R_{ij}$,
then the group is ${1,\,R_{12},\, R_{13},\, R_{23}}$, with $R_{12}R_{13}=R_{23}$ and any $(R_{ij})^2=1$.
This group can be embedded into $\Z_2 \times \Z_2 \times \Z_2^*$, where $\Z_2^*$ is generated by
\be
J'_{\xi_1,\xi_2} =  \left(\begin{array}{ccc}
e^{i\xi_1} & 0 &0 \\
0 & e^{i\xi_2} & 0 \\
0 & 0 & e^{-i\xi_1 - i\xi_2}
\end{array}\right)\cdot J\,.\label{J'-Z2Z2}
\ee
At this moment, any pair $\xi_1$, $\xi_2$ can be used to construct the $\Z_2 \times \Z_2 \times \Z_2^*$ group.
Note also that this group can be also written as $\Z_2 \times \Z_2^* \times \Z_2^*$ or $\Z^*_2 \times \Z^*_2 \times \Z_2^*$ by redefining the generators.

The $\Z_2 \times \Z_2$-symmetric potential contains the following terms
\be
\tilde \lambda_{12} (\phi_1^\dagger \phi_2)^2 +
\tilde \lambda_{23} (\phi_2^\dagger \phi_3)^2 +
\tilde \lambda_{31} (\phi_3^\dagger \phi_1)^2 + h.c.\label{termsZ2Z2}
\ee
with complex $\tilde\lambda_{ij}$. If there phases are denoted as $\psi_{ij}$, then the condition
when (\ref{termsZ2Z2}) are symmetric under some $\Z_2^*$ with an appropriate choice of $\xi_1$, $\xi_2$ is
\be
\psi_{12} + \psi_{23} + \psi_{31} = 0\,,\label{Z2Z2Z2condition}
\ee
or alternatively, when the product $\tilde\lambda_{12} \tilde\lambda_{23} \tilde\lambda_{31}$ is real.

Note that in contrast to the previous cases, the full $T$-symmetric potential (\ref{3HDMpotential-Tsymmetric})
is invariant under $\Z_2 \times \Z_2 \times \Z_2^*$. This guarantees that no other unitary symmetry arises
in this case, and therefore this symmetry group is realizable.

\subsection{Embedding $\Z_2$}

The unitary abelian symmetry group $\Z_2$ can be implemented in a variety of ways, all of which are equivalent.
One can take, for example, the $\Z_2$ group generated by sign flips of the first and second doublets, $R_{12}$.
Again, the usual $CP$-transformation $J$ commutes with $R_{12}$.
An analysis similar to the previous case shows that the $\Z_2 \times \Z_2^*$ group generated by $R_{12}$ and $J$
is realizable.

There is also a possibility to embed the $\Z_2$ group into $\Z_4^*$. Consider the following
transformation
\be
J'_{\xi} =  \left(\begin{array}{ccc}
0 & e^{i\xi} & 0 \\
-e^{i\xi} & 0 & 0 \\
0 & 0 & e^{-2i\xi}
\end{array}\right)\cdot J\,,\quad (J')^2 = R_{12}\,,\label{J'-Z2}
\ee
where $\xi$ is an arbitrary but fixed parameter.
The potential with the usual restrictions (\ref{U1-restrictions}) and with extra terms
\be
\lambda_5(\phi_1^\dagger \phi_2)^2 + \lambda_6(\phi_1^\dagger\phi_2)\left[(\phi_1^\dagger\phi_1)-(\phi_2^\dagger\phi_2)\right]
+ \lambda_7 (\phi_1^\dagger\phi_3)(\phi_2^\dagger\phi_3)
+\lambda_8 (\phi_1^\dagger\phi_3)^2 + \lambda_9 (\phi_2^\dagger\phi_3)^2 + h.c.
\ee
satisfying conditions $|\lambda_8|=|\lambda_9|$ and $\psi_8 + \psi_9 = 2\psi_7 + \pi$ is invariant under $J'_{\xi}$
with $6\xi = \psi_8 + \psi_9$.
Adjusting $\lambda_5$ and $\lambda_6$, one can guarantee that no other unitary symmetry appears as a result of (\ref{U1-restrictions}).
Therefore, this $\Z_4^*$ group is realizable.


\begin{thebibliography}{99}
\bibitem{CPNSh}
E.~Accomando {\it et al.}, ``Workshop on CP studies and non-standard Higgs physics,''
  arXiv:hep-ph/0608079.
\bibitem{Haber}
J.F.~Gunion, H.E.~Haber, G.~Kane, S.~Dawson, {\em The Higgs
Hunter's Guide} (Addison-Wesley, Reading, 1990).
\bibitem{ZpDM} I.~P.~Ivanov, V.~Keus,
arXiv:1203.3426 [hep-ph].
\bibitem{Lee}
T.~D.~Lee, Phys. Rev. D {\bf 8}, 1226 (1973).
\bibitem{review2011}
G.~C.~Branco et al,
arXiv:1106.0034 [hep-ph].
\bibitem{CPexamples}
M.~Maniatis, A.~von Manteuffel and O.~Nachtmann,
  Eur.\ Phys.\ J.\ C {\bf 57}, 739 (2008);
M.~Maniatis and O.~Nachtmann,
  JHEP {\bf 0905}, 028 (2009);
P.~M.~Ferreira and J.~P.~Silva,
  Eur.\ Phys.\ J.\ C {\bf 69}, 45 (2010).
\bibitem{frobenius}
S.~F.~King, C.~Luhn,
  JHEP {\bf 0910}, 093 (2009);
Q.~-H.~Cao, S.~Khalil, E.~Ma, H.~Okada,
  Phys.\ Rev.\ Lett.\  {\bf 106}, 131801 (2011).
\bibitem{Botella}
F.~J.~Botella and J.~P.~Silva,
  Phys.\ Rev.\  D {\bf 51}, 3870 (1995);
G.~C.~Branco, L.~Lavoura and J~.P.~Silva, ``CP-violation'',
Oxford University Press, Oxford, England (1999).
\bibitem{Davidson}
S.~Davidson and H.~E.~Haber,  Phys.\ Rev.\ D {\bf 72}, 035004 (2005)
  [Erratum-ibid.\ D {\bf 72}, 099902 (2005)];
  H.~E.~Haber and D.~O'Neil, Phys.\ Rev.\ D {\bf 74}, 015018 (2006)
  [Erratum-ibid.\ D {\bf 74}, 059905 (2006)].
\bibitem{Gunion}
J.~F.~Gunion and H.~E.~Haber,  Phys.\ Rev.\ D {\bf 72}, 095002 (2005).
\bibitem{Oneil}
 D.~O'Neil,
  ``Phenomenology of the Basis-Independent CP-Violating Two-Higgs Doublet Model'',
  Ph.D. thesis, University of California Santa Cruz (2009),
  arXiv:0908.1363 [hep-ph].
\bibitem{Sartori}
 G.~Sartori and G.~Valente, arXiv:hep-ph/0304026.
\bibitem{Nagel}
F. Nagel, ``New aspects of gauge-boson couplings and the Higgs sector'', Ph.D. thesis, University Heidelberg
(2004), [http://www.ub.uni-heidelberg.de/archiv/4803].
\bibitem{Maniatis}
 M.~Maniatis, A.~von Manteuffel, O.~Nachtmann and F.~Nagel,  Eur.\ Phys.\ J.\  C {\bf 48}, 805 (2006);
M.~Maniatis, A.~von Manteuffel and O.~Nachtmann, Eur.\ Phys.\ J.\  C {\bf 57}, 719 (2008).
\bibitem{Nishi0}
C.~C.~Nishi, Phys.\ Rev.\ D {\bf 74}, 036003 (2006)
[Erratum-ibid.D {\bf 76}, 119901 (2007)].
\bibitem{Ivanov0}
 I.~P.~Ivanov,  Phys.\ Lett.\ B {\bf 632}, 360 (2006);
Phys.\ Rev.\ D {\bf 75}, 035001 (2007) [Erratum-ibid.  D {\bf 76}, 039902 (2007)];
Phys.\ Rev.\ D {\bf 77}, 015017 (2008).
\bibitem{generalizedCP}
P.~M.~Ferreira, H.~E.~Haber, M.~Maniatis, O.~Nachtmann, J.~P.~Silva,
  Int.\ J.\ Mod.\ Phys.\  {\bf A26}, 769-808 (2011).
  [arXiv:1010.0935 [hep-ph]].
\bibitem{pilaftsis}
A.~Pilaftsis,
  Phys.\ Lett.\ B {\bf 706}, 465 (2012).
\bibitem{Erdem}R.~Erdem,
  Phys.\ Lett.\  B {\bf 355}, 222 (1995).
\bibitem{Barroso}A.~Barroso, P.~M.~Ferreira, R.~Santos and J.~P.~Silva,
  Phys.\ Rev.\  D {\bf 74}, 085016 (2006).
\bibitem{Nishi1} C.~C.~Nishi,
  Phys.\ Rev.\  D {\bf 76}, 055013 (2007).
\bibitem{Ferreira}P.~M.~Ferreira and J.~P.~Silva,
  Phys.\ Rev.\  D {\bf 78}, 116007 (2008).
\bibitem{Grimus} W.~Grimus and P.~O.~Ludl,
J.\ Phys.\ A {\bf 43}, 445209 (2010); \\
W.~Grimus and P.~O.~Ludl, arXiv:1110.6376 [hep-ph].
\bibitem{Olaussen} K.~Olaussen, P.~Osland and M.~A.~Solberg,
  JHEP {\bf 1107}, 020 (2011).
\bibitem{Ivanov1}I.~P.~Ivanov, C.~C.~Nishi, Phys.\ Rev.\ D {\bf 82}, 015014 (2010).
\bibitem{Ivanov2}I.~P.~Ivanov, JHEP {\bf 1007}, 020 (2010).
\bibitem{Machado}A.~C.~B.~Machado, J.~C.~Montero, V.~Pleitez, Phys.\ Lett.\ B {\bf 697}, 318-322 (2011).
\bibitem{Fukuyama}T.~Fukuyama, H.~Sugiyama, K.~Tsumura, Phys.\ Rev.\ D {\bf 82}, 036004 (2010).
\bibitem{Adler}S.~Adler, Phys.\ Rev.\ D {\bf 60}, 015002 (1999).
\bibitem{Pontr} L.~S.~Pontryagin, Ann. of Math. {\bf 35}, 361-388 (1934). 
\bibitem{Karg} M.~I.~Kargapolov, Yu.~I.~Merzlyakov, ``Fundamentals of the Theory of Groups", Springer-Verlag, New York, 1979.
\bibitem{frustrated}I.~P.~Ivanov, V.~Keus,
  Phys.\ Lett.\  {\bf B695}, 459-462 (2011).
\bibitem{Rob}D.~J.~S.~Robinson, ``A Course in the Theory of Groups'', Springer,
1996, Second Edition.
\end{thebibliography}
\end{document}